\newtheorem{theorem}{Theorem}
\newtheorem{remark}{Remark}
\newcommand{\argmin}{\operatornamewithlimits{argmin}}
\begin{document}

\title{Joint Energy-Bandwidth Allocation for Multi-User Channels with Cooperating Hybrid Energy Nodes}

\author{
    Vaneet~Aggarwal,  Mark R. Bell, Anis~Elgabli, Xiaodong Wang, and Shan Zhong
\thanks{V. Aggarwal, M. R. Bell, and A. Elgabli are with Purdue University, West Lafayette, IN  (e-mail:
\{vaneet, mrb, aelgabli\}@purdue.edu). X. Wang  and S. Zhong are with Columbia University, New York, NY (email: \{wangx, szhong\}@ee.columbia.edu).

This paper was presented in part at the IEEE International Conference on Communications, May 2017 \cite{pj_icc}. 
} % <-this % stops a space
}% <-this % stops a space
\maketitle

\vspace{-.2in}
\begin{abstract}
	In this paper, we consider the energy-bandwidth allocation for a network of multiple users, where the transmitters each powered by both an energy harvester and conventional grid, access the network orthogonally on the assigned frequency band. We assume that the energy harvesting state and channel gain of each transmitter can be predicted for $K$ time slots a priori. The different transmitters can cooperate by donating energy to each other. The tradeoff among the weighted sum throughput, the use of grid energy, and the amount of energy cooperation is studied through an optimization objective which is a linear combination of these quantities. This leads to an optimization problem with O($N^2K$) constraints, where $N$ is the total number of transmitter-receiver pairs, and the optimization is over seven sets of variables that denote energy and bandwidth allocation, grid energy utilization, and energy cooperation. To solve the problem efficiently, an iterative algorithm is proposed using the Proximal Jacobian ADMM. The optimization sub-problems corresponding to Proximal Jacobian ADMM steps are solved in closed form. We show that this algorithm converges to the optimal solution with an overall complexity of O($N^2K^2$). Numerical results show that the proposed algorithms can make efficient use of the harvested energy, grid energy, energy cooperation, and the available bandwidth.
	
\end{abstract}
\begin{IEEEkeywords}
	Energy Harvesting, Conventional Grid, Multi-user network, Proximal Jacobian ADMM.
\end{IEEEkeywords}

\section{Introduction}
The rapid development of energy harvesting technologies leads to a new paradigm of wireless communications powered by renewable energy sources \cite{EHSNSI,Energy_Scav}. By their nature, some renewable energy technologies (wind, solar, and run-of-river hydro) provide intermittent generation. Thus, hybrid energy sources with a mix of energy from the grid and renewable energy sources become important. Further, the grid allows different nodes to share energy with each other. Although energy harvesting can potentially enable sustainable and environmentally friendly deployment of wireless networks, it requires efficient utilization of energy and bandwidth resources \cite{SROPPE,Zhetcomoct2015}. In this paper, we consider optimizing the weighted sum throughput of a multi-user  network while minimizing both the energy from the grid and energy cooperation between the nodes.

In the absence of conventional energy, a number of works addressed energy scheduling  with non-causal channel state information. A single-user channel is considered in \cite{FHEARS,OTPBLE,OEAWCE,TEHNFW,Zheallerton13}. For multiple users, novel scheduling algorithms have been proposed for multiple-access channels \cite{OPSMAC,ZheJSAC15}, relay channels \cite{relayenergy1,relayenergy2,6381384}, broadcast channels \cite{broadcasting,BEHRT}, and interference channels \cite{renewic,Zheisit14,Zhetcomoct2015,ZheTCOM16,ZheJSAC16}. Recently, the authors of \cite{Zheisit14,Zhetcomoct2015} considered joint allocation of energy and bandwidth for multi-user networks with renewable energy. Cooperation between nodes has been studied for a relay channel in \cite{6657835}. { Hybrid energy supply at the transmitter has been studied in \cite{6678101,6525471,6528072}, where there is a single transmission node.} Cooperation for hybrid energy supply for cellular networks has been recently considered in \cite{6874568,6871428,6888498,6517783}. However, these papers do not consider the aspect of limited capacity of battery capacity, nor the maximum transmission power constraint, which will be taken into account in this paper. Limited battery capacity helps mitigate  the energy supply variations in time  and space. For example, at any given slot, a node with sufficient  energy can either share it with other nodes with insufficient energy, or store it for future use. The maximum power constraint on the transmitting node encourages cooperation to avoid energy wastage, reduces the need of grid energy, while making a part of the incoming energy prone useless and thus has to be discharged. In this paper, we consider the cost for the use of additional grid energy, and possible energy cooperation between nodes which brings new challenges to optimize the system throughput for a multi-user system.

This paper considers seven degrees of freedom for the design which include bandwidth allocation, transmission energy allocation, local harvested energy allocation, donated energy allocation, donated energy usage allocation, grid energy allocation, and discharged energy allocation. The problem is jointly convex, but has $O(N^2K)$ variables and $O(N^2K)$ constraints for $N$ users scheduling over $K$ time slots which makes it hard for a generic convex solver. This is mainly due to the donation of energy between two users in a time-slot need to be decided, which contributes to these many variables/constraints. In this paper, we give the optimal solution with complexity $O(N^2K^2)$. In the prior work \cite{ZheJSAC15}, since the incoming energy was stored in battery, used, or discharged which helped finding the energy discharge in each time slot by a greedy algorithm. The existence of energy cooperation between nodes makes such greedy algorithm for energy discharge no longer optimal. Without energy cooperation and grid energy, the problem reduces to that in \cite{Zheisit14}. This paper gives a different optimal algorithm in this case, with the same  computational complexity of $O(NK^2)$, and also provides convergence rate. %Both the problems in \cite{Zheisit14} and in this paper are solved by iterative algorithms, however we use Proximal Jacobian ADMM \cite{2013arXiv1312.3040D} in this paper with provable $o(1/k)$ convergence rather than alternating minimization in \cite{Zheisit14}. 

%However, the proposed algorithm for the inner optimization is iterative with linear convergence while the algorithm in \cite{Zheisit14} is not iterative which is mainly because the problem structure in this paper is more complex. %The problem structure in this problem is more complex, and has multiple variables to find jointly optimal solutions without an iterative approach.

The Alternating Direction Method of Multipliers (ADMM) is a widely used algorithm for solving separable convex optimization problems with linear constraints for two sets of variables. Global convergence of ADMM was established in the early 1990’s by Eckstein and Bertsekas \cite{Eckstein:1992:DSM:153390.153393}
while studying the algorithm as a particular instance of a Douglas-Rachford splitting method. This relationship
allowed them to use the monotone operator theory to obtain their global convergence guarantees. The interest in ADMM has exploded in recent years because of applications in signal and image processing,
compressed sensing \cite{2009arXiv0912.1185Y}, matrix completion \cite{Yuan:2009vu}, distributed optimization and statistical and machine
learning \cite{Boyd:2011:DOS:2185815.2185816}, and quadratic and linear programming \cite{doi:10.1137/120878951}. Extensions to more than two blocks have been recently considered.  For example, an ADMM-type algorithm is introduced in \cite{NIPS20145256}, where during each iteration a randomly selected subset of blocks is
updated in parallel. The method incorporates a backward step on the dual update to ensure convergence.
Hong and Luo \cite{2012arXiv1208.3922H} present a convergence proof for the $n$-block ADMM when the functions are convex, but
under many assumptions that are difficult to verify in practice. The work in \cite{Han2012} shows that ADMM is convergent
in the $n$-block case when the separable functions  are strongly convex. { The separable functions in the proposed problem in this paper are not strongly convex limiting the use of this algorithm}. Recently, different extensions of ADMM like Jacobian ADMM \cite{doi:10.1137/130922793}, Flexible ADDMM \cite{2015arXiv150204391R}, Proximal Jacobian ADMM \cite{2013arXiv1312.3040D}, etc. have been considered which give  different conditions on the guarantees for convergence. 

%We find that the the sub-problem of bandwidth allocation and the rest of the sub-problems can be iteratively solved using Alternating Minimization  \cite{beck}.   Recent results have shown sub-linear convergence for this approach \cite{beck}. This iterative algorithm involves two steps - solving for bandwidth allocation and solving for the rest of the variables. The sub-problem with respect to the rest of the variables is a separable convex problem. 
In this paper, we show that recent results of Proximal Jacobian Alternating Direction Method of Multipliers (Proximal Jacobian ADMM) for any number of variables can be used to give a  convergence speed in terms of the residual error as $o(1/k)$ for $k$ iterations \cite{2013arXiv1312.3040D}. {As this algorithm is based on ADMM, it solves convex optimization problems by breaking them into smaller, easier to handle pieces which can be solved in parallel, and is thus useful in distributed scenarios \cite{Boyd:2011:DOS:2185815.2185816}. This algorithm uses  Jacobi-type scheme that helps convergence of the algorithm and adds proximal terms to get a  convergence rate of $o(1/k)$. Further, the conditions of the convergence of the algorithm are conservative, and do not require strong convexity of the separable functions.}

%The Proximal Jacobian ADMM  algorithm  solves convex optimization problems by breaking them into smaller, easier to handle pieces which can be solved in parallel, making the algorithm useful in distributed scenarios \cite{Boyd:2011:DOS:2185815.2185816}. %To the best of our knowledge, this is the first paper that applies such iterative algorithms to problems in communications with energy harvesting nodes.
 The  challenge of the non-separable objective function is handled through pairing a set of variables into a single variable. %We note that Alternating Minimization is not applied to all seven sets  of variables directly since for more than two sets of variables, strict convexity guarantees are usually needed \cite{beck} which are not satisfied by the problem under consideration. Further, ADMM is not applied to the original problem since the original problem is not separable. Using the two algorithms in the proposed manner help us to obtain the optimal solution to the original problem. To the best of our knowledge, this is the first paper that applies such iterative algorithms to problems in communications with energy harvesting nodes. The  challenge of the non-separable objective function is handled through alternating minimization, while the challenge of sum of  multi-variate non-strict convex functions is handled through ADMM. %Directly using alternating optimization over all variables may not be optimal as the results over multiple variables typically require strong convexity \cite{beck}. Similarly, ADMM requires separability. Thus, this paper considers a combination of the two iterative approaches.

 The proposed algorithm reveals a tradeoff between the system throughput, amount of energy consumption from the grid, and the amount of energy cooperation. The system designer can use this tradeoff region to choose an optimal operating point. The simulation results depict these tradeoffs. Further, we investigate the different interactions of incoming and outgoing energy, and their impact with changing cost of energy cooperation. An interesting observation is that with low cost of energy cooperation, a node with low energy arrivals may receive donated energy from other nodes not necessarily to consume it but to transfer it back when others need thus making efficient use of battery sizes at different nodes. %This is because lower incoming energy can help the node with low energy in the battery and thus energy can be donated and then taken back mitigating impacts of energy variations across nodes and time. However, the cost of energy cooperation and grid energy utilization with limited capacity battery at each node makes the tradeoff in the problem interesting.

The main contributions of the paper are as follows. 
\begin{enumerate}[leftmargin=*]
	\item This paper % is the first paper, to the best of our knowledge, that
	 { jointly} considers use of grid and renewable energy with maximum battery capacity, and energy cooperation between all nodes in a multi-user network. 
	\item %This is the first work, to the best of our knowledge,  where 
	Multi-variable Proximal Jacobian ADMM is used and shown to be optimal for this problem, with an efficient splitting of variables. 
	\item Unlike the prior works where the bandwidth allocation had to be assumed at least $\epsilon$ {(for some $\epsilon>0$)}, and an outer loop was needed to decrease $\epsilon$ {due to non-differentiability of the objective function at zero bandwidth allocation} \cite{Zhetcomoct2015,ZheJSAC16}, this paper uses power and bandwidth allocation as a single variable in Proximal Jacobian ADMM to show that such conditions are no longer needed. 
	\item The seven sets of sub-problems are solved in closed form, except one of the sub-problems in which single variable equation needs to be solved.
	\item The proposed algorithm has been used in a window-based schemes with limited prediction and the performance gap as compared to the offline strategy has been shown to decrease with increasing window size.
	\item The proposed algorithm reveals a tradeoff between the system throughput, amount of energy consumption from the grid, and the amount of energy cooperation, and performs better than the considered causal and greedy baselines. 
\end{enumerate}

  The remainder of the paper is organized as follows. In Sections II, we describe the system model
and formulate the  problem. In Section III, we solve our problem efficiently using Proximal Jacobian ADMM, and prove the convergence to the optimal solution. In Section IV, we provide numerical results for the proposed solution. Finally, Section V concludes the paper.
%In Section IV, we treat the energy-bandwidth scheduling problem for multiple non-orthogonal broadcast channels and extend the algorithms to obtain the optimal energy-bandwidth allocation.  Simulation results are provided in Section V. Finally, Section VI concludes the paper.

\vspace{-.2in}
\section{System Model and Problem Formulation}

Consider a network consisting of $N$ pairs of transmitters and receivers with a total bandwidth
of $B$ Hz. Assume that no two transmitters can transmit in the same time slot and the same
frequency band thus the channel is accessed orthogonally by sharing the total bandwidth without
any overlap. We consider a flat-fading channel where the channel gain is constant within the
entire frequency band of $B$ Hz and over the coherence time of $T_c$ seconds. Assume a scheduling
period of $K$ time slots and the duration of a time slot of $T_c$ seconds. We denote $X_{nki}$ as the symbol sent to the receiver of link $n$ at instant $i$ in slot $k$ ({ a time-slot is composed of multiple time-instants }). The corresponding  received signal at receiver $n \in {\cal N}\triangleq \{1, \cdots, N\}$  is given by
\begin{equation}
Y_{nki} = h_{nk}X_{nki} + Z_{nki},
\end{equation}
where $h_{nk}$ represents the complex channel gain for link $n$ in slot $k$,  and $Z_{nki}\sim {\sf CN}(0,1/T_c)$ is the i.i.d. complex Gaussian noise (i.e., the power spectrum density of the noise is $1/T_c$). We denote $H_n^k\triangleq |h_{nk}|^2$ and denote $p_n^k$ as the  total transmission energy consumption for link $n$ in slot $k$. Without loss of generality, we assume $T_c=B=1$. Assuming that link $n$ uses a normalized bandwidth of $a_n^k$ in time-slot $k$, we use an upper bound on the achievable rate for link $n$,   $\sum_{k=1}^{K} a_n^k \log(1+\frac{p_n^kH_n^k}{a_n^k})$ as the system performance metric \cite{Cover:2006:EIT:1146355}, where $0\cdot\log(1+\frac{x}{0}) \triangleq 0$. {  We note that in general frequency reuse may give benefits, while optimal capacity for general  interference channels is an open problem \cite{4567443}. Thus, we use bandwidth splitting to orthogonalize transmissions, while still being able to have multiple users transmit in the same time slot. Joint energy-bandwidth allocation has been shown to be advantageous with energy harvesting with limited capacity battery since multiple users can transmit simultaneously and avoid energy wastage  \cite{Zhetcomoct2015,ZheJSAC16}.}

Assume that each transmitter is equipped with a hybrid energy source, with access to both the energy from the grid, and the energy from the energy harvesting device. The energy from the grid to transmitter $n$ is unbounded, whose cost  influences the amount of energy that can be used from the grid. For transmitter $n$ in time-slot $k$, let $g_n^k$ be the energy used from the grid, and $l_n^k$ is the amount of local harvested energy that is used. The energy harvesting device at transmitter $n$ harvests energy from the surrounding environment, and is equipped with a buffer battery of capacity $B_n^{\max}$.  We denote $E_n^k$ as the total energy harvested up to the end of slot $k$ by transmitter $n$. Since in practice energy harvesting can be accurately predicted for a short period~\cite{WPDFUE,PMEHW}, we assume that the amount of the harvested energy, $E_n^k-E_n^{k-1}$ in each slot $k$ is known.  { Moreover, the short-term prediction of the channel gain in slow fading channels is also possible \cite{FCPMRA}.} Therefore, we assume that $\{H_n^k\}$ and $\{E_n^k\}$ are known non-causally before scheduling. { We will further consider a window-based scheme with limited look-ahead information  of the channels and energy harvesting  in Section IV.}

We further assume that different transmitters can donate energy to each other. { The energy donation can for instance happen through a power grid or as a wireless power transfer  \cite{6678101,6525471,6528072}.} However, there is a  cost to this energy cooperation which will prioritize using the locally harvested energy at each node as opposed to cooperation. Let $r_{n,m}^k$ be the amount of energy that is donated from node $n$ to node $m$ in time slot $k$, where $r_{n,n}^k =0 $. A part of the incoming donated energy $s_n^k \le \sum_{m=1}^N r_{m,n}^k$ is used while the rest is stored in the battery.  Thus, the amount of power used for communication for transmitter $n$ in time slot $k$ is  $p_n^k=l_n^k+ s_{n}^k+g_n^k$.

We assume that each transmitter $n$ has a maximum per-slot transmission energy consumption, $P_n$, such that $p_n^k\leq P_n$ for all $k\in {\cal K}\triangleq \{1,2,\cdots,K\}$. Thus, all the energy may not be used and some may get wasted. Let $D_n^k$ be the amount of energy that is discharged {(or wasted)} by node $n$ in time slot $k$. For transmitter $n$, assuming that the battery is empty initially, then the battery level at the end of slot $k$ can be written as
\begin{eqnarray}\label{eq:battery}
B_n^k &=& B_n^{k-1} + \left( E_n^k - E_n^{k-1}\right)-  l_n^k -\sum_{m\in {\cal N}}r_{n,m}^k \nonumber\\&&+ \sum_{m\in {\cal N}}r_{m,n}^k - s_n^k-D_n^k,
\end{eqnarray}
where $B_n^k$ must satisfy  $0\leq B_n^k \leq B_n^{\max}$ for all $k\in{\cal K}$. The  constraints on the battery level can be re-written as

\begin{eqnarray}\label{eq:rbattery}
0&\leq&  E_n^k-\sum_{t=1}^{k} l_n^t -\sum_{t=1}^{k}\sum_{m\in {\cal N}} r_{n,m}^t+\sum_{t=1}^{k}\sum_{m\in {\cal N}} r_{m,n}^t \nonumber\\
&&-\sum_{t=1}^{k}s_n^t-\sum_{t=1}^{k}D_n^t  \leq B_n^{\max}.
\end{eqnarray}
Moreover, we denote ${\cal P}\triangleq \{\boldsymbol{p}_n\;|\; \boldsymbol{p}_n\triangleq [p_n^1,p_n^2,\ldots,p_n^K], n\in{\cal N}\}$ as the {\em transmission energy allocation}, ${\cal A}\triangleq \{\boldsymbol{a}_n\;|\; \boldsymbol{a}_n\triangleq [a_n^1,a_n^2,\ldots,a_n^K], n\in{\cal N}\}$ as the {\em bandwidth allocation}, ${\cal L}\triangleq \{\boldsymbol{\cal L}_n\;|\; \boldsymbol{l}_n\triangleq [l_n^1,l_n^2,\ldots,l_n^K], n\in{\cal N}\}$ as the {\em local harvested energy allocation}, ${\cal R}\triangleq \{\boldsymbol{r}_{n,m}\;|\; \boldsymbol{r}_{n,m}\triangleq [r_{n,m}^1,r_{n,m}^2,\ldots,r_{n,m}^K], n\ne m, n,m\in{\cal N}\}$ as the {\em donated energy allocation}, ${ \cal S}\triangleq \{\boldsymbol{s}_n\;|\; \boldsymbol{s}_n\triangleq [s_n^1,s_n^2,\ldots,s_n^K], n\in{\cal N}\}$ as the {\em donated energy usage allocation}, ${ \cal G}\triangleq \{\boldsymbol{g}_n\;|\; \boldsymbol{g}_n\triangleq [g_n^1,g_n^2,\ldots,g_n^K], n\in{\cal N}\}$ as the {\em grid energy allocation}, and ${\cal D}\triangleq \{\boldsymbol{D}_n\;|\; \boldsymbol{D}_n\triangleq [D_n^1,D_n^2,\ldots,D_n^K],n\in{\cal N}\}$ as the {\em discharged energy allocation}. {A system model is described in Figure \ref{fig:model}, where we consider four transmitters, each equipped with a battery. For transmission, a part of grid energy, energy from battery, renewable energy, and donated energy is used. All the arriving energy that could not be used in a time slot is saved in the battery. The centralized controller makes all decisions of the different allocations for each node in each time slot. }

\begin{figure}[ht]
	\vspace{-.2in}
	\begin{center}
		\includegraphics[trim = .1in 1.8in 0.1in 1.8in, clip, width= 0.45\textwidth]{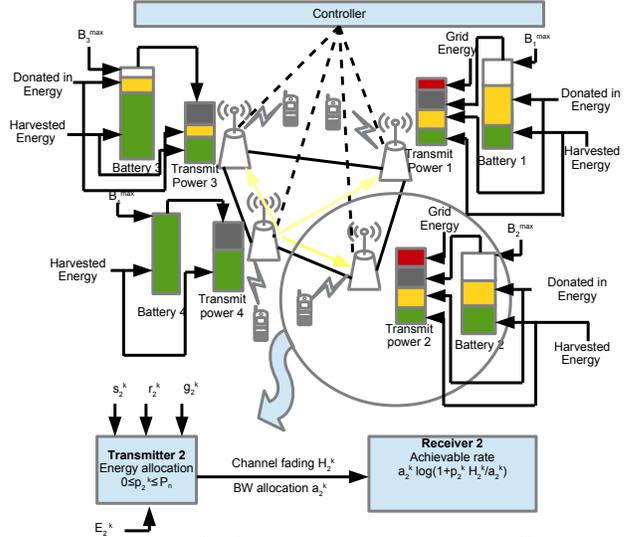}
		\vspace{-5mm}
		\caption{\small System model for four transmitters depicting different energy arrivals at a node.}
		\label{fig:model}
	\end{center}
	\vspace{-.2in}
\end{figure}

We wish to maximize the weighted sum rate of all links while minimizing the use of grid energy and energy cooperation. Thus, the objective is to maximize

\vspace{-.1in}
\begin{eqnarray}
&&C_{\cal W}({\cal P},{\cal A},{\cal L},{\cal R}, {\cal S}, {\cal G}, {\cal D}) \nonumber\\&=&\sum_{n=1}^{ N} W_n \sum_{k=1}^{K} a_n^k \log(1+\frac{p_n^kH_n^k}{a_n^k}) - \lambda\left(\sum_{k=1}^{K}\sum_{n=1}^{N} g_n^k\right)\nonumber\\&& -\mu\left(\sum_{k=1}^{K}\sum_{n=1}^N \sum_{m=1, m\ne n}^N r_{n,m}^k\right),
\end{eqnarray}
where  $\lambda\ge 0$ and $\mu\ge 0$ are parameters that impact the cost of using energy from the grid, and cooperation, respectively, and ${\cal W}\triangleq \{W_n, n\in{\cal N}\}$ is the {\em weight set}, which determines the weight (or priority)  of different links. Increasing $\lambda$ would mean that the energy from the grid will be used less since it gets more expensive. Similarly, increasing $\mu$ would make energy cooperation more expensive. {The values of $\lambda$ and $\mu$ can be chosen by the system designer to choose a tradeoff point between the system throughput, amount of energy cooperation, and the use of grid energy. %A tradeoff with change of $\mu$ is depicted in Figure \ref{fig:tradeoff}.}

%\begin{figure}[ht]
%	\vspace{-2mm}
%	\begin{center}
%		\includegraphics[trim = 2in .85in 4.3in 1.9in, clip, width= 0.3\textwidth]{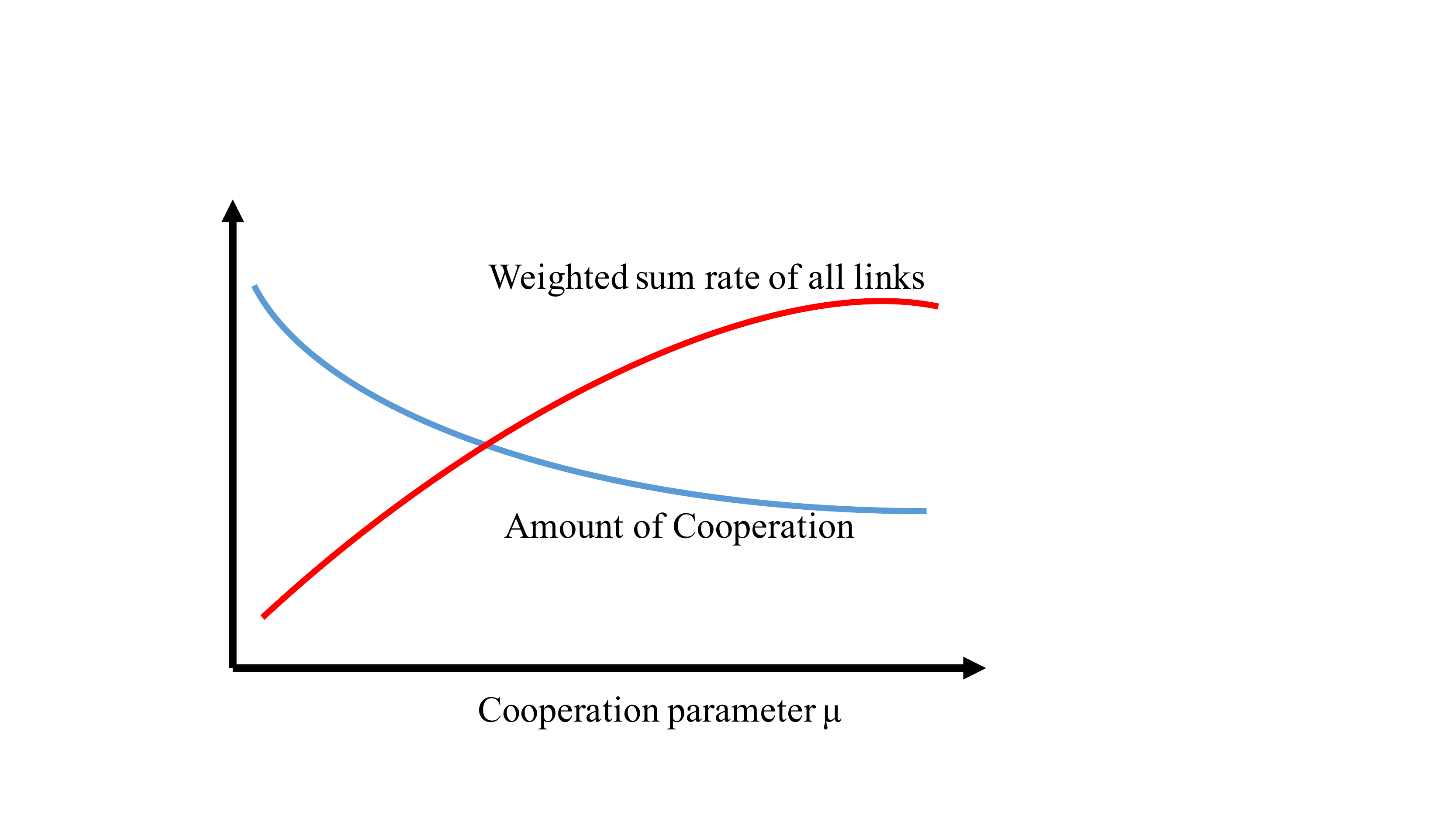}
%		\vspace{-.15in}
%		\caption{\small Tradeoff between weighted sum rate and amount of cooperation.}
%		\label{fig:tradeoff}
%	\end{center}
%	\vspace{-.4in}
%\end{figure}

%the system is willing to pay higher cost and get more grid (and/or donated) energy in order to achieve higher rate, and .  In particular, when $W_m=1$ for all $m\in{\cal M}$ and $\lambda$ and $\mu=0$, $C_{\cal W}({\cal P},{\cal A}) $ becomes the throughput of the network.

%Note that, all $a_m^k$, $p_m^k$, $l_m^k$, $r_{n,m}^k$, and $g_m^k$ can be zero in \eqref{eq:oobj}. However, if $a_m^k=0$, the channel rate of link $m$ in slot $k$ is zero, even if the energy allocation $p_m^k>0$, thus $p_m^k$ is actually wasted. However, we still treat the pair $(a_m^k=0,p_m^k >0)$ as feasible as long as $\sum_{m\in{\cal M}_n}p_m^k \leq P_n$.

We thus have the following problem to optimize the system resources:
\begin{equation}\label{eq:oproblem}
\max_{{\cal P},{\cal A},{\cal L},{\cal R}, {\cal S},{\cal G}, {\cal D}}C_{\cal W}({\cal P},{\cal A},{\cal L},{\cal R}, {\cal S},{\cal G}, {\cal D}),
\end{equation}
subject to

\begin{equation}\label{eq:ocst}
\left\{\begin{array}{l}
0\leq  E_n^k-\sum_{t=1}^{k} l_n^t -\sum_{t=1}^{k}\sum_{m\in {\cal N}} r_{n,m}^t\\
+\sum_{t=1}^{k}\sum_{m\in {\cal N}} r_{m,n}^t -\sum_{t=1}^{k}s_n^t-\sum_{t=1}^{k}D_n^t  \leq B_n^{\max},\\
p_n^k -l_n^k- s_{n}^k - g_n^k = 0, \  s_n^k \le \sum_{m\in {\cal N}, m\ne n} r_{m,n}^k,\\
\sum_{i=1}^N a_i^k = 1, p_n^k \leq P_n, \\
a^k_m, p_n^k, l_n^k, r_{n,m}^k \text { for } n\ne m, s_n^k, g_n^k,D_n^k \geq 0,
\end{array}\right.
\end{equation}
for all $k\in{\cal K}, m, n\in{\cal N}$.

{
\begin{remark} We note that the energy donated from one node to another may lead to an energy transfer loss {(e.g., propagation loss)}, which can be easily incorporated into the model by adding an efficiency parameter to the received energy. This will not change any of the results in this paper, and is thus ignored for the rest of this paper.
\end{remark}
}

{ We note that the problem is convex with respect to each set of variables. However, it is not separable in all the variables. Further, the objective is not differentiable at $a_n^k=0$ which limits the applicability of a generic convex solver. In addition, the complexity of a generic convex solver is exponential in the number of constraints \cite{Boyd:2004:CO:993483}, which in this case is $O(N^2K)$. Thus, this paper proposes an algorithm that exploits the problem structure to give a computationally efficient solution.}
\vspace{-.2in}
\section{Optimal Algorithm}
\vspace{-.05in}
There are seven sets of variables in the problem - bandwidth allocation ${\cal A}$, transmission energy allocation ${\cal P}$, local harvested energy allocation ${\cal L}$, donated energy allocation ${\cal R}$, donated energy usage allocation  ${ \cal S}$, grid energy allocation ${ \cal G}$, and discharged energy allocation ${\cal D}$. We note that the proposed problem in \eqref{eq:oproblem}-\eqref{eq:ocst} is jointly convex in all variables. We  use the Proximal Jacobian Alternating Direction Method of Multipliers (Proximal Jacobian ADMM) technique to solve this problem \cite{2013arXiv1312.3040D}. The Proximal Jacobian ADMM  algorithm  solves convex optimization problems by breaking them into smaller and easier pieces which can be run in parallel and is thus useful for large-scale distributed convex optimization.  Since the standard Proximal Jacobian ADMM does not allow inequalities, we add additional variables to only have equality constraints.

\begin{flalign*}
&\Gamma({\cal P},{\cal A},{\cal L},{\cal R},{\cal S},{\cal G}, {\cal D}, {\cal U}) =  -C_{\cal W}({\cal P},{\cal A},{\cal L},{\cal R}, {\cal S},{\cal G}, {\cal D}) +  \nonumber \\
&\sum_{n=1}^N \sum_{k=1}^K\left(I(u_{1,n}^k)+I(u_{2,n}^k)+I(u_{3,n}^k)+I(u_{4,n}^k)+I(p_n^k)+I(\frac{}{}\right. \nonumber\\
&a_n^k)+I(l_n^k)+I(s_n^k)+I(g_n^k)+I(D_n^k) \left.+\sum_{m=1, m\ne n}^N I(r_{n,m}^k)\right), 
\end{flalign*}
where ${\cal U}\triangleq \{(\boldsymbol{u}_{1,n},\boldsymbol{u}_{2,n},\boldsymbol{u}_{3,n},\boldsymbol{u}_{4,n})| \boldsymbol{u}_{i,n}\triangleq [u_{i,n}^1,u_{i,n}^2,\ldots,u_{i,n}^K], n\in{\cal N}, i\in \{1,2,3,4\}\}$ are the auxiliary variables that help remove inequalities in the constraints, and $I(\cdot)$ is the indicator function which represents $I(x)=0$ for $x\geq0$ and is infinite otherwise. Thus, the problem in \eqref{eq:oproblem}-\eqref{eq:ocst} becomes

\begin{equation}
\quad \min_{{\cal P},{\cal A},{\cal L},{\cal R},{\cal S},{\cal G}, {\cal D}, {\cal U}} \Gamma({\cal P},{\cal A},{\cal L},{\cal R},{\cal S},{\cal G}, {\cal D}, {\cal U}),\label{epdef}
\end{equation}
subject to
\begin{equation}\label{epconst}
\left\{\begin{array}{l}
\sum_{t=1}^{k} l_n^t +\sum_{t=1}^{k}\sum_{m\in {\cal N}, m\ne n} r_{n,m}^t +\sum_{t=1}^{k}D_n^t+u_{1,n}^k\\-\sum_{t=1}^{k}\sum_{m\in {\cal N}, m\ne n} r_{m,n}^t +\sum_{t=1}^{k}s_n^t=E_n^k,\\
\sum_{t=1}^{k} l_n^t +\sum_{t=1}^{k}\sum_{m\in {\cal N}, m\ne n} r_{n,m}^t +\sum_{t=1}^{k}D_n^t-u_{2,n}^k\\-\sum_{t=1}^{k}\sum_{m\in {\cal N}, m\ne n} r_{m,n}^t +\sum_{t=1}^{k}s_n^t= E_n^k-B_n^{\max},\\
p_n^k -l_n^k- s_{n}^k - g_n^k = 0, p_n^k +u_{3,n}^k = P_n,  \\s_n^k +u_{4,n}= \sum_{m\in {\cal N}, m\ne n} r_{m,n}^k,
\sum_{i=1}^N a_i^k = 1,
\end{array}\right.
\end{equation}
for all $k\in{\cal K}, m, n\in{\cal N}$.

Let the augmented Lagrangian $\psi$ be defined as in Equation \eqref{eq:augmented}. The Proximal Jacobian ADMM updates variables $(p_n^k,a_n^k)$, $l_n^k$, $r_{m,n}^k$, $s_n^k$, $g_n^k$, $D_n^k$, and $u_n^k$ for all $m, n, $ and $k$ in sequence, whose steps are summarized in Algorithm 1. Note that the objective function is not separable in $p_n^k$ and $a_n^k)$, which is why $(p_n^k,a_n^k)$ is taken as a single variable.

\begin{figure*}[!t]
	{\small
\begin{eqnarray}\label{eq:augmented} 
&&\psi({\cal P},{\cal A},{\cal L},{\cal R},{\cal S},{\cal G}, {\cal D}, {\cal U}, {\cal Y})= \Gamma({\cal P},{\cal A}, {\cal L},{\cal R},{\cal S},{\cal G}, {\cal D}, {\cal U})\nonumber\\&&
+ \sum_ {k,n} y_{1,n}^k  \left(\sum_{t=1}^{k} l_n^t +\sum_{t=1}^{k}\sum_{m\in {\cal N}, m\ne n} r_{n,m}^t -\sum_{t=1}^{k}\sum_{m\in {\cal N}} r_{m,n}^t +\sum_{t=1}^{k}s_n^t+\sum_{t=1}^{k}D_n^t+u_{1,n}^k-E_n^k\right) \nonumber\\&&
+  \sum_ {k,n}y_{2,n}^k  \left(\sum_{t=1}^{k} l_n^t +\sum_{t=1}^{k}\sum_{m\in {\cal N}, m\ne n} r_{n,m}^t -\sum_{t=1}^{k}\sum_{m\in {\cal N}} r_{m,n}^t +\sum_{t=1}^{k}s_n^t+\sum_{t=1}^{k}D_n^t-u_{2,n}^k-E_n^k+B_n^{\max}\right) \nonumber\\&&
+  \sum_ {k,n}y_{3,n}^k  \left(p_n^k -l_n^k-s_{n}^k - g_n^k\right)+ \sum_ {k,n}y_{4,n}^k \left(p_n^k +u_{3,n}^k - P_n\right) + \sum_ {k,n}y_{5,n}^k \left(s_n^k +u_{4,n}^k - \sum_{m\in {\cal N}, m\ne n} r_{m,n}^k\right)\nonumber\\&&+\frac{\rho}{2}   \sum_ {k,n} \left( \sum_{t=1}^{k} l_n^t +\sum_{t=1}^{k}\sum_{m\in {\cal N}, m\ne n} r_{n,m}^t-\sum_{t=1}^{k}\sum_{m\in {\cal N}} r_{m,n}^t +\sum_{t=1}^{k}s_n^t +\sum_{t=1}^{k}D_n^t  +u_{1,n}^k-E_n^k\right)^2\nonumber\\&&
+ \frac{\rho}{2}   \sum_ {k,n}\left(\sum_{t=1}^{k} l_n^t +\sum_{t=1}^{k}\sum_{m\in {\cal N}, m\ne n} r_{n,m}^t -\sum_{t=1}^{k}\sum_{m\in {\cal N}} r_{m,n}^t +\sum_{t=1}^{k}s_n^t+\sum_{t=1}^{k}D_n^t-u_{2,n}^k -E_n^k+B_n^{\max} \right)^2 \nonumber\\&&
+ \frac{\rho}{2} \left(\sum_n a_n^k -1\right)^2
+  \frac{\rho}{2}  \sum_ {k,n} \left(p_n^k -l_n^k-s_{n}^k - g_n^k\right)^2+\frac{\rho}{2}  \sum_ {k,n} \left(p_n^k +u_{3,n}^k - P_n\right)^2\nonumber\\&&
+ \frac{\rho}{2}  \sum_ {k,n} \left(s_n^k +u_{4,n}^k - \sum_{m\in {\cal N}, m\ne n} r_{m,n}^k\right)^2+ \sum_ {k}y_{6}^k \left(\sum_n a_n^k -1\right).
\end{eqnarray}}
\vspace{-.2in}
\end{figure*}

%Let the augmented Lagrange $\Psi$ be defined as:\\
%\begin{multline}
%\psi({\cal P},{\cal L},{\cal R},{\cal G}, {\cal D}, {\cal U}, {\cal Y})= \Gamma({\cal P},{\cal L},{\cal R},{\cal G}, {\cal D}, {\cal U})\\
%+ \sum_ {k,n} y_{1,n}^k  (\sum_{t=1}^{k} l_n^t +\sum_{t=1}^{k}\sum_{m\in {\cal N}, m\ne n} r_{n,m}^t -\sum_{t=1}^{k}\sum_{m\in {\cal N}} r_{m,n}^t +\sum_{t=1}^{k}D_n^t+u_{1,n}^k-E_n^k) \\
%+  \sum_ {k,n}y_{2,n}^k  (\sum_{t=1}^{k} l_n^t +\sum_{t=1}^{k}\sum_{m\in {\cal N}, m\ne n} r_{n,m}^t -\sum_{t=1}^{k}\sum_{m\in {\cal N}} r_{m,n}^t +\sum_{t=1}^{k}D_n^t-u_{2,n}^k-E_n^k+B_n^{\max}) \\
%+  \sum_ {k,n}y_{3,n}^k  (p_n^k -l_n^k - g_n^k)+ \sum_ {k,n}y_{4,n}^k (p_n^k +u_{3,n}^k - P_n)\\+\frac{\rho}{2}   \sum_ {k,n} \parallel \sum_{t=1}^{k} l_n^t +\sum_{t=1}^{k}\sum_{m\in {\cal N}, m\ne n} r_{n,m}^t-\sum_{t=1}^{k}\sum_{m\in {\cal N}} r_{m,n}^t +\sum_{t=1}^{k}D_n^t+u_{1,n}^k-E_n^k\parallel_2^2\\
% + \frac{\rho}{2}   \sum_{t=1}^{k} l_n^t +\sum_{t=1}^{k}\sum_{m\in {\cal N}, m\ne n} r_{n,m}^t -\sum_{t=1}^{k}\sum_{m\in {\cal N}} r_{m,n}^t +\sum_{t=1}^{k}D_n^t-u_{2,n}^k-E_n^k+B_n^{\max} \parallel_2^2\\
% +  \frac{\rho}{2}  \sum_ {k,m} \parallel p_n^k -l_n^k - g_n^k\parallel_2^2+\frac{\rho}{2}  \sum_ {k,n} \parallel p_n^k +u_{3,n}^k - P_n\parallel_2^2
%\end{multline}

%Given the augmented Lagrange function above, at every iteration, ADMM  updates all primal and dual variables sequentially, which is summarized in Algorithm 3.

\begin{figure*}
 \begin{minipage}[h]{6.5 in}
\rule{\linewidth}{0.3mm}%\vspace{-.1in}

{\bf {\footnotesize Algorithm 1 -  Proximal Jacobian ADMM for Solving Proposed Problem  in \eqref{eq:oproblem}-\eqref{eq:ocst} }}\vspace{-.1in}\\
\rule{\linewidth}{0.2mm}\vspace{.1in}
{ {\small
\begin{tabular}{ll}
	1:  Initialization: $i=0$, $({\cal P},{\cal A},{\cal L},{\cal R},{\cal S},{\cal G}, {\cal D}, {\cal U}, {\cal Y})^{0}=(0,0,0,0,0,0,0,0,0)$\\
	 Specify the ADMM parameters $\rho$, $\tau$, and $\gamma$, and the convergence threshold $\eta$\\
%	&{\bf IF} $c_n=0$ {\bf THEN} Set $c_n=\beta$ {\bf ENDIF}\\
    2: ADMM Iteration: {\bf REPEAT}\\
	 \quad $(p_n^k,a_n^k)^{i+1} \leftarrow \argmin_{p_n^k,a_n^k} \psi(({\cal P,\cal A}\setminus p_n^k,a_n^k)^{i},p_n^k,a_n^k,{\cal L}^{i},{\cal R}^{i},{\cal S}^{i},{\cal G}^{i}, {\cal D}^{i}, {\cal U}^{i}, {\cal Y}^{i})$ $+\frac{1}{2}\tau (p_n^k-(p_n^k)^{i})^2+\frac{1}{2}\tau (a_n^k-(a_n^k)^{i})^2,%\text{ using \eqref{eq:q1}-\eqref{eq:q2}}, \quad
	  \forall$ $k$, and $n$ \\
	 \quad{\footnotesize$(l_n^k)^{i+1} \leftarrow\argmin_{l_n^k} \psi({\cal P}^{i},{\cal A}^{i},({\cal L}\setminus l_n^k)^{i},l_n^k,{\cal R}^{i},{\cal S}^{i},{\cal G}^{i}, {\cal D}^{i}, {\cal U}^{i}, {\cal Y}^{i})+\frac{1}{2}\tau(l_n^k-(l_n^k)^{i})^2,%\text{ using \eqref{lnkupdate}}, 
		\quad \forall$ $k$, and $n$}\\
	 \quad{\footnotesize$(r_{n,m}^k)^{i+1} \leftarrow\argmin_{r_{n,m}^k} \psi({\cal P}^{i},{\cal A}^{i},{\cal L}^{i},({\cal R}\setminus r_{n,m}^k)^{i},r_{n,m}^k, {\cal S}^{i},{\cal G}^{i}, {\cal D}^{i}, {\cal U}^{i}, {\cal Y}^{i})+\frac{1}{2}\tau (r_{m,n}^k-(r_{m,n}^k)^{i})^2$}, %\\	 \quad \quad \quad \text{ using \eqref{updatermn}},
	 $\forall$ $k$, $n$, and $m$, $n\ne m$\\
	 \quad{\footnotesize$(s_{n}^k)^{i+1} \leftarrow\argmin_{s_m^k} \psi({\cal P}^{i},{\cal A}^{i},{\cal L}^{i},{\cal R}^{i}, ({\cal S}\setminus s_n^k)^{i},s_n^k,{\cal G}^{i}, {\cal D}^{i}, {\cal U}^{i}, {\cal Y}^{i})+\frac{1}{2}\tau (s_n^k-(s_n^k)^{i})^2,%\text{ using \eqref{snkupdate}},
	 	 \ \forall$ $k$,  and $n$}\\
	 \quad{\footnotesize$(g_{n}^k)^{i+1} \leftarrow\argmin_{g_n^k} \psi({\cal P}^{i},{\cal A}^{i},{\cal L}^{i},{\cal R}^{i}, {\cal S}^{i}, ({\cal G}\setminus g_n^k)^{i},g_n^k,{\cal D}^{i}, {\cal U}^{i}, {\cal Y}^{i})+\frac{1}{2}\tau(g_n^k-(g_n^k)^{i})^2,%\text{ using \eqref{gnkupdate}},
	 	 \ \forall$ $k$,  and $n$}\\
	 \quad{\footnotesize$(d_{n}^k)^{i+1} \leftarrow\argmin_{d_n^k} \psi({\cal P}^{i},{\cal A}^{i},{\cal L}^{i},{\cal R}^{i}, {\cal S}^{i}, {\cal G}^{i},({\cal D}\setminus d_n^k)^{i},d_n^k, {\cal U}^{i}, {\cal Y}^{i})+\frac{1}{2}\tau(D_n^k-(D_n^k)^{i})^2,%\text{ using \eqref{updatednk}},
	 	 \ \forall$$k$ ,and $n$}\\
	 \quad$(u_{j,n}^k)^{i+1} \leftarrow\argmin_{u_{i,n}^k} \psi({\cal P}^{i},{\cal A}^{i},{\cal L}^{i},{\cal R}^{i}, {\cal S}^{i}, {\cal G}^{i},{\cal D}^{i},({\cal U}\setminus u_{j,n}^k)^{i},u_{j,n}^k,  {\cal Y}^{i})+\frac{1}{2}\tau(u_n^k-(u_n^k)^{i})^2$,%\\	 \quad \quad \text{ using \eqref{updateunk1}-\eqref{updateunk2}}, \quad 
	 \ $\forall$ $k$, $n$, and $j=1,2,3,4$\\
 \quad $(y_{1,n}^k)^{i+1} \leftarrow (y_{1,n}^k)^i+\gamma \rho(\sum_{t=1}^{k} l_n^t +\sum_{t=1}^{k}\sum_{m\in {\cal N}, m\ne n} r_{n,m}^t -\sum_{t=1}^{k}\sum_{m\in {\cal N}} r_{m,n}^t %$\\ \quad \quad$
  +\sum_{t=1}^{k}s_n^t+\sum_{t=1}^{k}D_n^t+u_{1,n}^k-E_n^k)^{i+1}\  \forall$ $k$,  and $n$\\		
 \quad $(y_{2,n}^k)^{i+1} \leftarrow (y_{2,n}^k)^i+\gamma \rho(\sum_{t=1}^{k} l_n^t +\sum_{t=1}^{k}\sum_{m\in {\cal N}, m\ne n} r_{n,m}^t -\sum_{t=1}^{k}\sum_{m\in {\cal N}} r_{m,n}^t$\\
 \quad \quad$  +\sum_{t=1}^{k}s_n^t+\sum_{t=1}^{k}D_n^t-u_{2,n}^k-E_n^k+B_n^{\max})^{i+1}\quad \forall$ $k$,  and $n$\\	
	\quad $(y_{3,n}^k)^{i+1} \leftarrow (y_{3,n}^k)^i+\gamma \rho(p_n^k -l_n^k-s_{n}^k - g_n^k)^{i+1}\quad \forall$ $k$,  and $n$\\
	 \quad $(y_{4,n}^k)^{i+1} \leftarrow (y_{4,n}^k)^i+\gamma \rho(p_n^k +u_{3,n}^k - P_n)^{i+1}\quad \forall$ $k$,  and $n$\\
 \quad $(y_{5,n}^k)^{i+1} \leftarrow (y_{5,n}^k)^i+\gamma \rho(s_n^k +u_{4,n}^k - \sum_{m\in {\cal N}, m\ne n} r_{m,n}^k)^{i+1}\quad \forall$ $k$,  and $n$\\
 \quad $(y_{6}^k)^{i+1} \leftarrow (y_{6}^k)^i+\gamma \rho(\sum_n a_n^k -1)^{i+1}\quad \forall$ $k$\\
 \quad $i\leftarrow i+1$\\
	 {\bf UNTIL} {\footnotesize$|\psi({\cal P}^{i},{\cal A}^{i},{\cal L}^{i},{\cal R}^{i},{\cal S}^{i},{\cal G}^{i}, {\cal D}^{i}, {\cal U}^{i}, {\cal Y}^{i}) - \psi({\cal P}^{i-1},{\cal A}^{i-1},{\cal L}^{i-1},{\cal R}^{i-1},{\cal S}^{i-1},{\cal G}^{i-1}, {\cal D}^{i-1}, {\cal U}^{i-1}, {\cal Y}^{i-1})|<\eta$}\\
\end{tabular}}}\vspace{.1in}\\
\rule{\linewidth}{0.3mm}
\end{minipage}
\vspace{-.2 in}
\end{figure*}
\begin{remark}
 Algorithm 1 is a distributed parallel algorithm. In particular, the variables associated with different $(n,k)$ can be  updated independently and in parallel.
\end{remark}

We note that there are seven sets of arg-mins in Algorithm 1. All these problems are convex problems (since the indicator functions are equivalent to linear constraints). The detailed solutions to these problems are given in Appendix, where the problems are solved in closed form, except the first where the solution is in terms of a solution to a single variable equation. %The equation number corresponding to each update is referred to in  Algorithm 1.
 The next theorem states the optimality of Algorithm 1.

\begin{theorem}
	Algorithm 1 optimally solves the problem in \eqref{eq:oproblem}-\eqref{eq:ocst}, and converges with an error rate of $o(1/b)$ after $b$ iterations when the ADMM parameters are chosen such that 
	\begin{equation}
	\tau >4K\rho \left(\frac{9NK+N^2K}{2-\gamma} -1\right),\label{condition}
	\end{equation}
	for $0<\gamma<2$, and $\rho>0$.
\end{theorem}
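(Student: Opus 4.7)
The plan is to apply the multi-block Proximal Jacobian ADMM convergence theorem of Deng-Lai-Peng-Yin \cite{2013arXiv1312.3040D}, which guarantees an $o(1/b)$ rate on the primal-dual residual provided a proximal regularization condition holds on every block. The reformulation in \eqref{epdef}--\eqref{epconst} is already in the canonical form $\min \sum_i f_i(x_i)$ subject to $\sum_i A_i x_i = b$ with each $f_i$ closed proper convex: the slack variables ${\cal U}$ turn every original inequality into an equality plus a nonnegativity indicator, and pairing $(p_n^k, a_n^k)$ into a single block removes the only coupling hidden inside $\Gamma$, namely the nonseparable term $-C_{\cal W}$ (which is jointly convex in $(p_n^k, a_n^k)$ but not separable between them).

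The sufficient condition in \cite{2013arXiv1312.3040D} requires that the proximal regularizer $P_i$ of every block $i$ satisfy $P_i \succ \rho\left(\frac{n_b}{2-\gamma}-1\right) A_i^T A_i$, where $n_b$ is the effective number of blocks in the Jacobi update. Since Algorithm 1 performs parallel updates indexed by $(n,k)$, additionally by $m$ for the donation variable and by $j\in\{1,2,3,4\}$ for the slacks, the block count is
\[ n_b = \underbrace{NK}_{(p,a)} + \underbrace{NK}_{l} + \underbrace{N^{2}K}_{r} + \underbrace{NK}_{s} + \underbrace{NK}_{g} + \underbrace{NK}_{D} + \underbrace{4NK}_{u} = 9NK+N^{2}K, \]
which matches the numerator appearing in \eqref{condition}.

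The next step is to bound $\|A_i^T A_i\|$ uniformly across blocks. Each scalar variable of type $l,r,s,D$ at slot $k$ enters the cumulative battery equalities for every future slot $t\ge k$, and there are two such equalities per slot (upper and lower battery bounds), contributing at most $2(K-k+1)\le 2K$ to its column-squared-norm. The single-slot equalities (power balance, peak-power slack, donated-usage slack, bandwidth sum) contribute at most a bounded constant, and a direct accounting gives $\|A_i^T A_i\|\le 4K$ for every block. Choosing $P_i=\tau I$ then specializes the abstract condition to $\tau > 4K\rho\left(\frac{9NK+N^{2}K}{2-\gamma}-1\right)$, which is exactly \eqref{condition}.

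With both estimates in hand, the remainder is routine: each blockwise $\argmin$ is a well-posed strongly convex problem thanks to the added $\tfrac{\tau}{2}\|\cdot-(\cdot)^{i}\|^{2}$ term, and the closed-form (or single-equation) solutions are produced in the Appendix. Invoking the main convergence theorem of \cite{2013arXiv1312.3040D} then yields convergence of the iterates to a primal-dual optimal point of \eqref{eq:oproblem}--\eqref{eq:ocst} together with the $o(1/b)$ residual bound. The main obstacle I anticipate is the uniform $\|A_i^T A_i\|$ bound: the battery columns have growing support as $k$ decreases, so one has to verify that even the worst-case column (near $k=1$) sits below $4K$ once the contributions of both battery bounds and all remaining single-slot equalities are combined. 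Once that estimate is pinned down, substituting into the Deng-Lai-Peng-Yin condition recovers \eqref{condition} mechanically.
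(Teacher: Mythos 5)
Your proposal follows essentially the same route as the paper's proof: it invokes the Proximal Jacobian ADMM convergence result of \cite{2013arXiv1312.3040D} after checking that the reformulated problem \eqref{epdef}--\eqref{epconst} has closed proper convex separable blocks (with the $(p_n^k,a_n^k)$ pairing absorbing the only non-separability), counts $9NK+N^2K$ blocks, and bounds $\|A_i^TA_i\|\le 4K$ to turn the abstract proximal condition into \eqref{condition}. The paper simply asserts that each variable appears in at most $4K$ constraints with unit coefficients, so the per-column accounting you flag (in particular the worst-case donation variable near $k=1$) is the one place where you are more explicit than, but otherwise consistent with, the published argument.
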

\begin{proof}
	The objective function $\Gamma({\cal P},{\cal A},{\cal L},{\cal R},{\cal S},{\cal G}, {\cal D}, {\cal U})$ is separable in all the variables $(p_n^k,a_n^k)$, $l_n^k$, $r_{m,n}^k$, $s_n^k$, $g_n^k$, $D_n^k$, and $u_n^k$ for all $m, n, $ and $k$, and the constraints are linear equalities. The problem is convex optimization and all the separable functions are closed proper convex, which satisfies the  assumptions for the optimality of Proximal Jacobian ADMM in \cite{2013arXiv1312.3040D}. % are both satisfied. The first assumption is that all the separable functions are closed proper convex, and the second assumption is that there is a saddle point to the problem. 
	Further, the choice of parameters in (\ref{condition}) satisfy the parameter conditions in \cite{2013arXiv1312.3040D} for the $9NK+N^2K$ number of variables ( $(p_n^k,a_n^k)$, $l_n^k$, $r_{m,n}^k$, $s_n^k$ $g_n^k$, $D_n^k$, and $u_n^k$ for all $m, n, $ and $k$ ), and that each variable is in at-most $4K$ constraints with absolute multiplicative coefficient of $1$ in the problem (\ref{eq:oproblem})-(\ref{eq:ocst}).
	%
	%Appendix A shows that all the assumptions for optimality in \cite{2012arXiv1208.3922H} are satisfied, and thus the optimality and the linear convergence properties of ADMM hold for $\alpha$ sufficiently small.
\end{proof}

The next result gives the computational complexity of the proposed algorithm. 

\begin{theorem}
	Each iteration of Algorithm 1 has $O(N^2K^2)$ computational complexity. 
\end{theorem}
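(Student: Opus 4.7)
The plan is to bound the cost of one iteration of Algorithm 1 by counting scalar decision variables, identifying a one-time per-iteration precomputation that reduces each argmin to a small quadratic, and then bounding the cost of that quadratic per variable. The dominating block is the donation variables $r_{n,m}^k$, of which there are $N(N-1)K=O(N^2K)$; the other six blocks contain only $O(NK)$ scalars in total. I will show that each scalar argmin costs $O(K)$ work, so the $r_{n,m}^k$ block produces $O(N^2K^2)$ and the others $O(NK^2)$, for an overall $O(N^2K^2)$.

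First I would tabulate, once per outer iteration, the constraint residuals that appear in the augmented Lagrangian $\psi$: the two battery residuals (the $(\cdot)^2$ arguments in the two $\frac{\rho}{2}$ terms on the first two lines of \eqref{eq:augmented}), the bus-equality residual $p_n^k-l_n^k-s_n^k-g_n^k$, the power-cap residual $P_n-p_n^k-u_{3,n}^k$, and the donation-usage residual $s_n^k+u_{4,n}^k-\sum_{m\ne n}r_{m,n}^k$. The battery residuals contain double sums of the form $\sum_{t\le k}\sum_m r_{n,m}^t$, but by first forming the per-slot totals $\sum_m r_{n,m}^t$ in $O(N^2K)$ work and then cumulating along $t$ in $O(NK)$ work, every battery residual is obtained in total time $O(N^2K)$. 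I would also compute the suffix sums $\sum_{t\ge k}(\cdot)$ of these residuals in $O(NK)$; these compactly encode the $O(K)$ constraints in which each cumulative variable appears.

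Next I would analyse the seven argmin steps. For every variable other than the pair $(p_n^k,a_n^k)$, the sub-problem is a one-dimensional proximal quadratic plus a non-negativity indicator, whose minimizer is the positive part of a ratio of linear combinations of the tabulated residuals and Lagrange multipliers restricted to the at-most-$4K$ constraints containing that variable (the same bound used in the proof of Theorem 1). Using the suffix sums, this assembly is $O(K)$ per scalar, so the $r_{n,m}^k$ block costs $O(K)\cdot O(N^2K)=O(N^2K^2)$ and each of the other blocks costs at most $O(NK^2)$. For the $(p_n^k,a_n^k)$ pair the sub-problem is two-dimensional and contains a logarithmic term; eliminating $a_n^k$ through its first-order condition reduces the update to a single monotone scalar equation (as derived in the Appendix), solvable to fixed tolerance in $O(1)$ bisection steps after an $O(K)$ assembly of coefficients, for a block total of $O(NK^2)$.

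Finally the dual updates evaluate the primal residual at the new primal iterate, which is $O(1)$ per multiplier given the tables, for $O(NK)$ overall. Summing the blocks yields $O(N^2K)+O(NK^2)+O(N^2K^2)+O(NK)=O(N^2K^2)$. The main obstacle is the bookkeeping needed to keep every per-variable argmin at $O(K)$ rather than $O(NK)$: without the precomputed cumulative-along-$t$ and sum-over-$m$ tables, the double sums inside the battery constraints would be re-evaluated inside every argmin, inflating the total to $O(N^3K^2)$ and breaking the stated bound.
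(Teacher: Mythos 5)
Your proof is correct and follows essentially the same route as the paper: a block-by-block count in which every scalar update costs $O(K)$ once the per-slot donation totals $\sum_{m\ne n} r_{m,n}^k$ (and their cumulative sums) are precomputed and stored, so that the $O(N^2K)$ variables $r_{n,m}^k$ dominate and give $O(N^2K^2)$, with all other blocks at $O(NK^2)$ or less. Your treatment is somewhat more explicit than the paper's (suffix-sum tables, dual-update cost, and the warning that skipping the precomputation would inflate the bound), and your $O(NK^2)$ charge for the $(p_n^k,a_n^k)$ block is a mild over-count of the paper's $O(NK)$, which in any case does not affect the stated bound.
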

\begin{proof}
We note that the detailed steps the sub-problems are given in the Appendix. Problem 1 for each $n$ and $k$ involves solving a single-variable equation and is thus  $O(NK)$ computational complexity. Problem 2 for each $n$ and $k$ needs a sum over $v$ from $k$ to $K$ and thus has  $O(NK^2)$  complexity. Problem 3 is for energy cooperation which for every $m$, $n$, and $k$ requires $O(K)$ time, and has $O(N^2K^2)$  complexity. We assume that for each $n$ and $k$, $\sum_{m\ne n} r_{m,n}^k$ can be computed and stored which is $O(N^2K)$ complexity. Using this, Problem 4 has $O(NK^2)$  complexity. Similarly, Problem 5 has $O(NK)$  complexity. Problem 6 has $O(NK^2)$  complexity. Having stored values of $\sum_{m\ne n} r_{m,n}^k$ and $\sum_{m\ne n} r_{n,m}^k$ for each $n$ and $k$, Problem 7 has a complexity of $O(NK^2)$. Thus, the overall complexity is dominated by the complexity of Problem 3, and is $O(N^2K^2)$.
\end{proof}

\begin{remark}
	Without any energy cooperation, $r_{m,n}^k=0$ and the overall complexity of the proposed algorithm is $O(NK^2)$. 
\end{remark}

We further note that this problem was considered without the energy cooperation and grid energy in \cite{ZheJSAC16} and the proposed algorithm in \cite{ZheJSAC16} has the same computational complexity of  $O(NK^2)$. However, the algorithm in \cite{ZheJSAC16} performs the first step of calculating the optimal discharge allocation. However, in the presence of grid energy and cooperation, such allocation cannot be found since the energy can be transferred to other nodes rather than discharging. Having removed the discharge variables, there were only two variables for power and bandwidth left, which could be solved using an alternating minimization based approach in \cite{ZheJSAC16}. However, we have many more sets of variables. In order to use Proximal Jacobian ADMM, we had to use the power and bandwidth as the variables in a single block. Rather than performing an alternating minimization over these variables, the joint optimization is solved in this paper. Thus, this paper gives an alternate way of solving the algorithm in \cite{ZheJSAC16} with the same complexity. In addition, the proposed algorithm in this paper works for the energy cooperation and grid energy parameters while the approach of \cite{ZheJSAC16} do not extend easily.

We finally note that the cooperation may involve a loss due to efficiency of transfer which can be easily incorporated in the given constraints without changing the problem structure or the approach. In order to keep the expressions simpler, we have not included the efficiencies for energy transfer as well as battery charging or discharging.

\vspace{-.2in}

\section{Simulation and Results}

%First of all we want to evaluate the energy cost trade-off, so we fix the bandwidth allocation, and we assume equal channel gain and equal weight for all users. We consider five broadcast channels $N=5$, and each one communicate with one receiver ${\cal M}_1=1$, ${\cal M}_2=1$,....., ${\cal M}_{5}=1$. We set the scheduling period as K=10 slots. For each transmitter n, we set the initial battery level $B_n^0=0$ and the battery capacity $B_n^{max}=20$ units. We assume the harvested energy $E_n^k$ follows a truncated Gaussian distribution with mean $\mu_n=10$ and variance of 2, ${\cal N}(10,2)$. We first assume only harvested and grid energy (no -energy cooperation).We vary $\lambda$ (the grid energy use cost factor), and we plot the achievable system throughput-cost curve. Note that as $\lambda$ increases, the system tends to rely less on the grid energy and more on the harvested one. For high value of $\lambda$, the system uses only the harvested energy, so it is expected to have lower system throughput but at the same time lower cost. In contrast to that, if $\lambda=0$, the system can use as much as possible of grid energy to serve the users in the highest achievable rate constrained to the maximum power, but the cost is very high in this case.

In this section, we will evaluate the proposed algorithm in different scenarios.

\vspace{-.15in}
\subsection{Impact of Grid Energy Cost}\label{sec:conv}

%\begin{figure}[ht]
%\vspace{-2mm}
%\begin{center}
%\includegraphics[trim = 0in 2.3in 0in 2.5in, clip, width= 0.45\textwidth]{./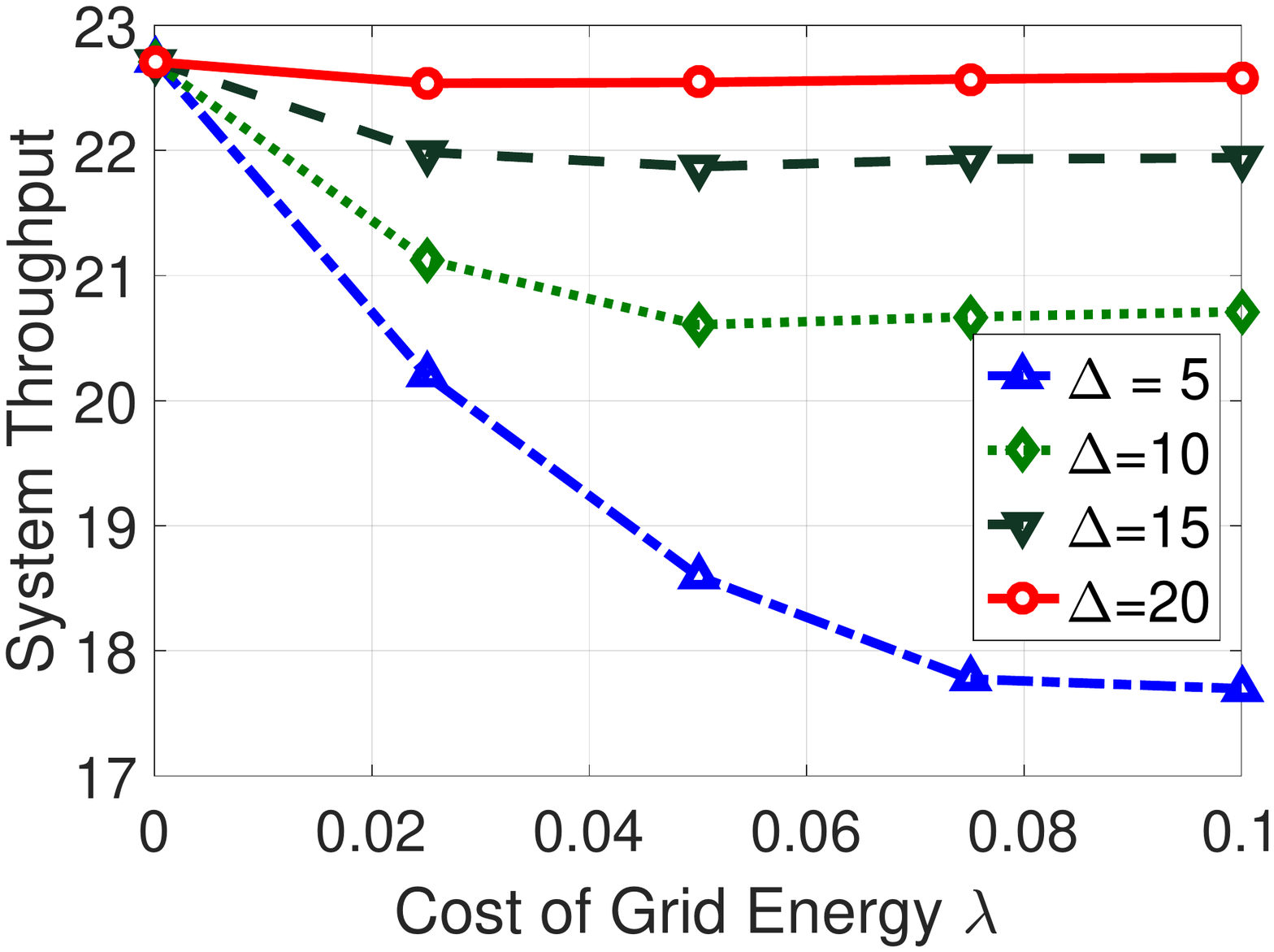}
%%\vspace{-6mm}
%\caption{\small Decrease in system throughput with increasing $\lambda$ for different values of $\Delta$.}
%\label{fig:rate_grid}
%\end{center}
%%\vspace{-.3in}
%\end{figure}

\begin{figure*}[ht]
\vspace{-3mm}
\begin{minipage}{0.31\textwidth}
\begin{center}
\includegraphics[trim = 0in 2.3in 0in 2.5in, clip, width= \textwidth]{./figs/new_figures/figure_sys_rate_iter_4500.pdf}
%\vspace{-6mm}
\caption{\small Decrease in system throughput with increasing $\lambda$ for different values of $\Delta$.}
\label{fig:rate_grid}
\end{center}
\end{minipage}
\hspace{.02\textwidth}
\begin{minipage}{0.31\textwidth}
	\begin{center}
		{\includegraphics[trim = 0.2in 2.3in 0.5in 2.5in, clip, width=.98\textwidth,draft=false]{./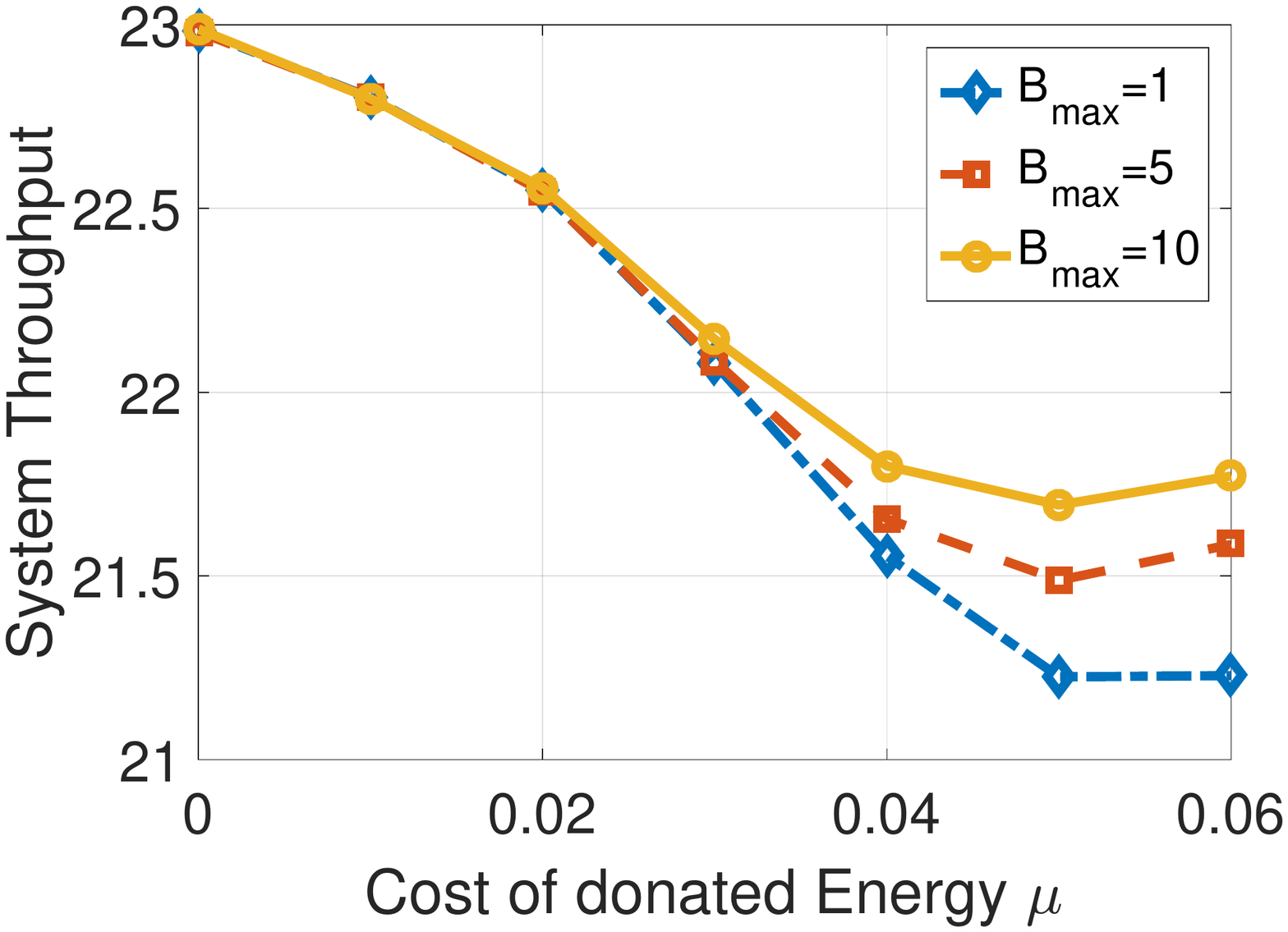}}\vspace{-.2in}
		\caption{\small Decrease of rate with increase in cost of donation, $\mu$, for different values of battery capacity $B_{\max}$.  }
		\label{fig:rate_donation}
	\end{center}
\end{minipage}
\hspace{.02\textwidth}
%\vspace{3mm}
\begin{minipage}{0.31\textwidth}
	\begin{center}
		{\includegraphics[trim = .2in 2.3in .5in 2.5in, clip, width=\textwidth,draft=false]{./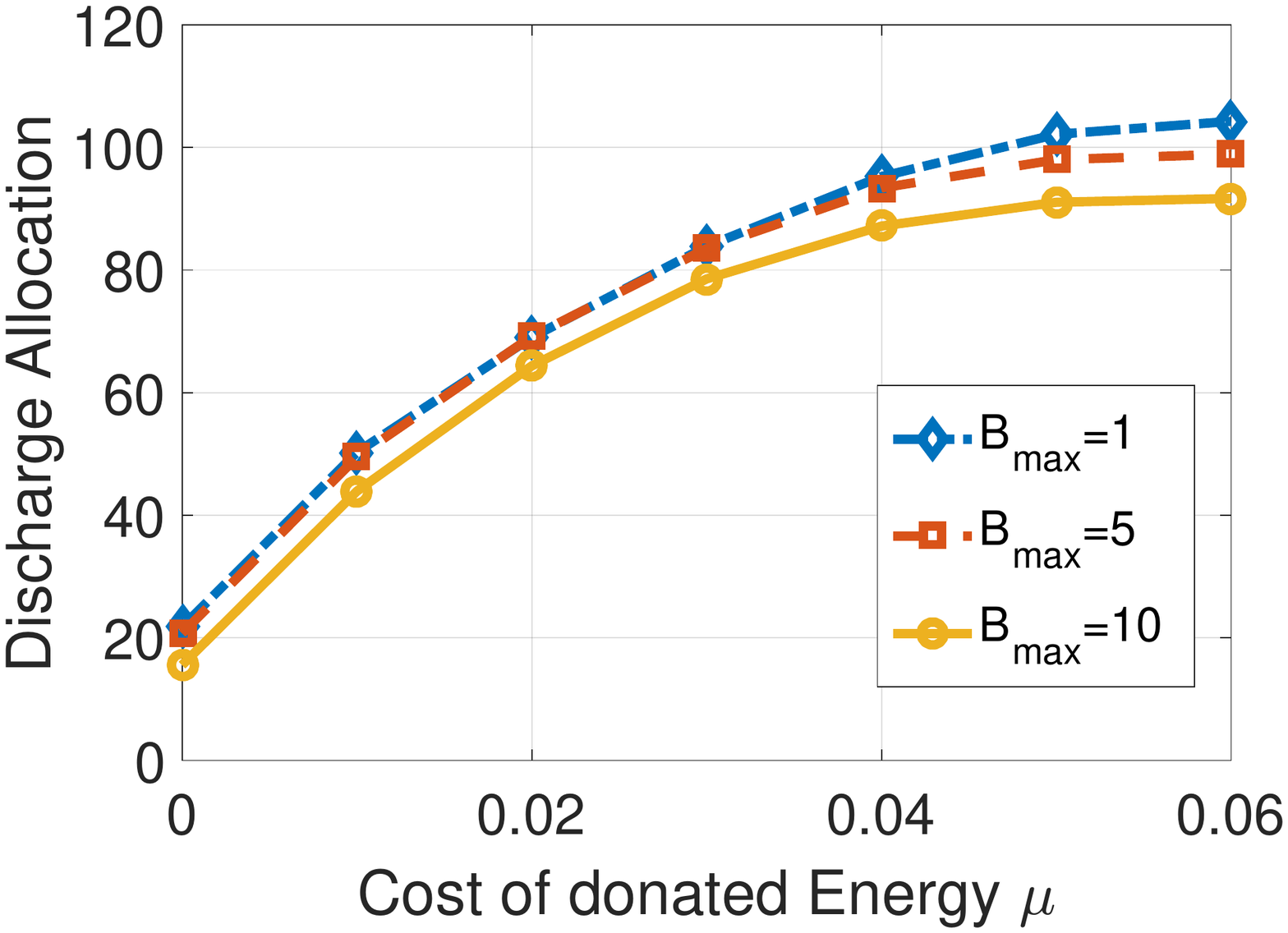}}\vspace{-.2in}
		\caption{\small Increase of discharged energy with increase in cost of donation, $\mu$, for different values of battery capacity $B_{\max}$.}
		\label{fig:discharge_donation}
	\end{center}
\end{minipage}
\vspace{-.1in}
\end{figure*}

We consider $N=5$ users with a scheduling period of $K=5$ time slots. The weights of all the users are taken to be identical,  $W_n=1$.  We assume that the maximum power constraints is $P_n=20W$, and maximum battery capacity is $B_n^{\max}=20W$. The channel gains $\{h_n^k\}$ are distributed as ${\sf CN}(0,1)$. For the energy arrival $E_n^k-E_{n}^{k-1}$, a truncated Gaussian distribution is used which is given by the maximum of zero and a Gaussian random variable with mean $\Delta_n$ and variance $4$. Let $\Delta_n = \Delta$, independent of $n$. The Proximal Jacobian ADMM parameters are $\rho=10^{-3}, \gamma=1$, $\tau =0.5$, and the convergence threshold for the iterative loop is chosen to be $\eta = 10^{-6}$. Let the cooperation cost to be $\mu=0.2$. The grid energy cost $\lambda$ is chosen as a variable.

%We choose $\Delta_n = \frac{200n}{N(N+1)}$, for $n= 1, \cdots, N$, such that $\sum_n \Delta_n = 100$,, i.e., fixed total mean harvested energy irrespective of the value of $N$. The grid energy cost $\lambda$, and the energy cooperation cost $\mu$ are both chosen to be equal to 0.01. The ADMM parameters are $\rho=1, \alpha=0.1$, and the convergence threshold for the inner ADMM loop is chosen to be $\eta = 10^{-3}$.

%We consider $N=20$ users, and other system parameters are the same as those in Sect.  \ref{sec:conv} except  for $\mu=0.2$,  $\Delta_n = \Delta$, independent of $n$, and we vary the grid energy cost $\lambda$. 

The results are averaged over 12 runs with different realizations of  channel gains and energy arrivals. Figure \ref{fig:rate_grid} shows the decrease of system throughput (in nats) with increasing  grid energy cost,  $\lambda$, for four different values of $\Delta = 5, 10, 15, $ and $20$J. We note that when $\lambda=0$, the system throughput is independent of $\Delta$ since the energy from grid can be taken up to the maximum power constraint in the chosen bandwidth. Thus, the problem in this case becomes  a bandwidth allocation problem with each node using the maximum energy constraint in each time slot. Due to similar channel gains for all links, assuming equal bandwidth for all links, and ignoring the random effect of channel gains give the system rate as $5\times \log(1+ 20\times 5) \approx 23$ nats. Thus, the result at $\lambda=0$ is almost equal to this. As the use of grid energy becomes more expensive, the system throughput reduces. For $\Delta=20$J, more energy arrives at each node and thus there is little decrease in throughput with increasing $\lambda$ as compared to the case where grid energy is free. However, for smaller value of $\Delta$, the system throughput decreases significantly with $\lambda$. An optimal operating point can be chosen based on the system design requirements.

%\begin{figure*}[ht]
%\vspace{-3mm}
%\begin{minipage}{0.45\textwidth}
%\begin{center}
%{\includegraphics[trim = 0.2in 2.3in 0.5in 2.5in, clip, width=\textwidth,draft=false]{./figs/new_figures/sys_rate.pdf}}\vspace{-.2in}
%\caption{\small Decrease of rate with increase in cost of donation, $\mu$, for different values of battery capacity $B_{\max}$.  }
%\label{fig:rate_donation}
%\end{center}
%\end{minipage}
%\hspace{.05\textwidth}
%%\vspace{3mm}
%\begin{minipage}{0.45\textwidth}
%\begin{center}
%{\includegraphics[trim = .2in 2.3in .5in 2.5in, clip, width=\textwidth,draft=false]{./figs/new_figures/sys_discharge.pdf}}\vspace{-.2in}
%\caption{\small Increase of discharged energy with increase in cost of donation, $\mu$, for different values of battery capacity $B_{\max}$.}
%\label{fig:discharge_donation}
%\end{center}
%\end{minipage}
%%\vspace{-.25in}
%\end{figure*}
\vspace{-.15in}
\subsection{Impact of Energy Cooperation Cost}	\label{sec:simulation_cooperation}

We consider $N=5$ users with $\Delta_n = 5n$J, $n=1, \cdots, 5$.  The maximum battery capacity of each node is chosen to be the same $B_n^{\max} = B_{\max}$. The grid energy cost $\lambda= 0.1$, and $\mu$ is \ a variable.  All other parameters are same as those in Sect. \ref{sec:conv}. Figure \ref{fig:rate_donation} plots the system throughput with respect to $\mu$ for different values of $B_{\max}$. We note that the system throughput decreases with $\mu$ since the donation across nodes is more expensive. Figure \ref{fig:discharge_donation} demonstrates that the amount of energy that is discharged and thus not used also increases with the increase in $\mu$. This is because  different nodes have different average incoming energy and by penalizing donation, the energy does not get evenly distributed.

\begin{figure*}[ht]
\vspace{-3mm}
\begin{minipage}{0.31\textwidth}
\begin{center}
{\includegraphics[trim = 0.3in 2in 1.2in 2.5in, clip, width=\textwidth,draft=false]{./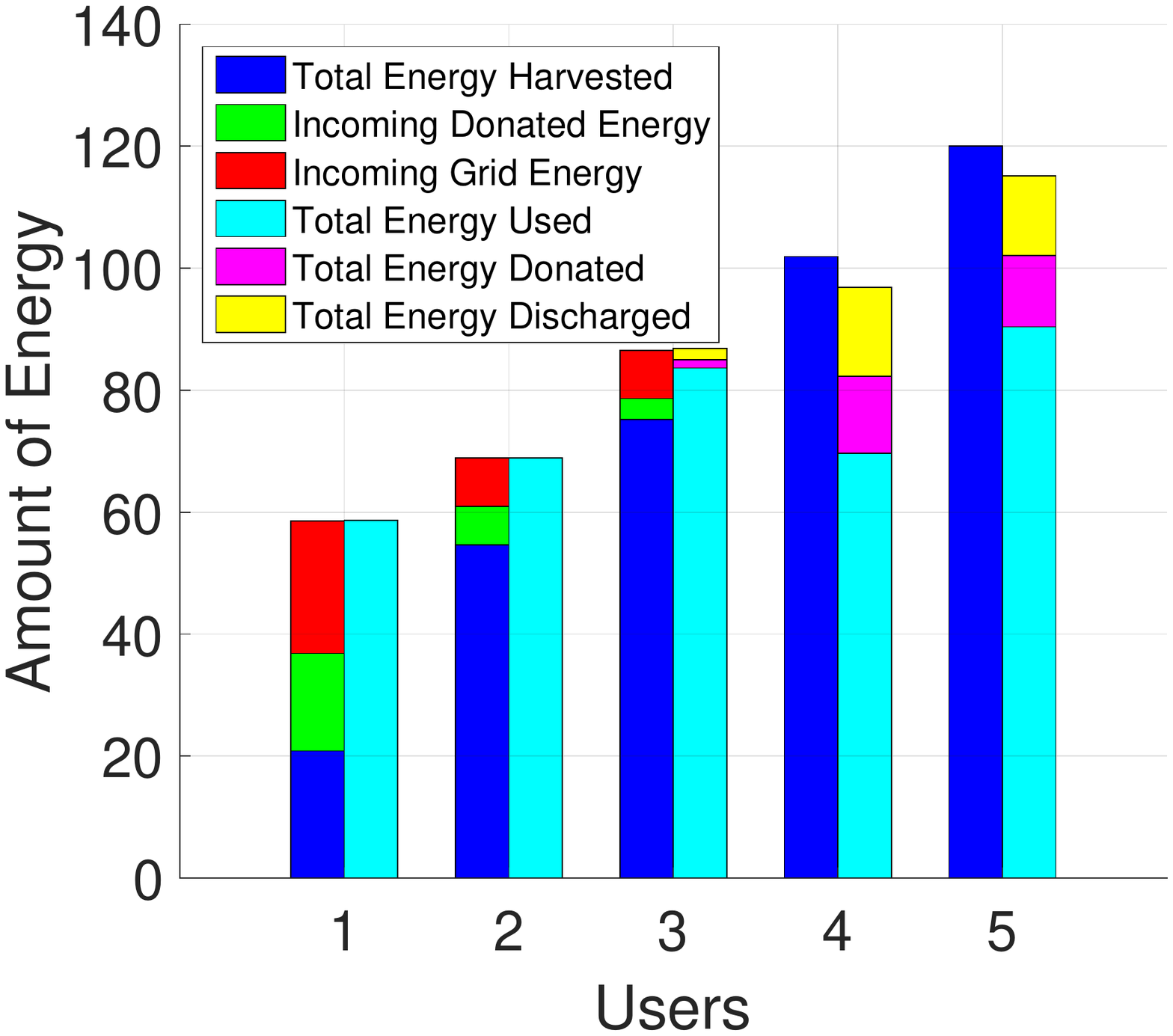}}\vspace{-.1in}
\caption{\small Incoming and outgoing energy for different nodes, $\mu=0.01$.  }
\label{fig:diffen_mu}
\end{center}
\end{minipage}
\hspace{.02\textwidth}
%\vspace{3mm}
\begin{minipage}{0.31\textwidth}
\begin{center}
{\includegraphics[trim = 0.3in 2in 1.2in 2.5in, clip, width=\textwidth,draft=false]{./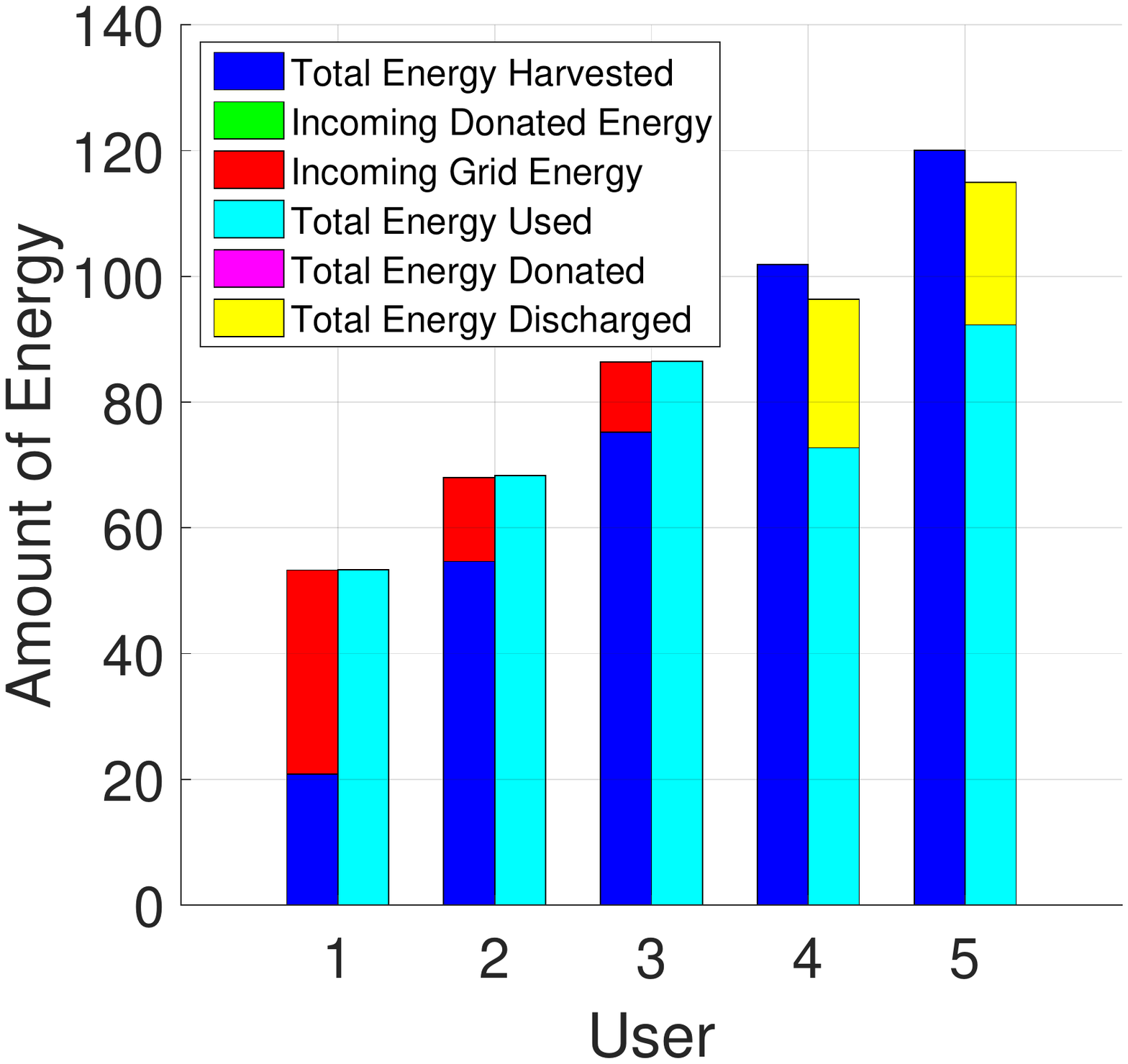}}\vspace{-.1in}
\caption{\small Incoming and outgoing energy for different nodes, $\mu=0.1$. }
\label{fig:diffen_muin}
\end{center}
\end{minipage}
\hspace{.02\textwidth}
%\vspace{3mm}
\begin{minipage}{0.31\textwidth}
	\begin{center}
		{\includegraphics[trim = 0.2in 2in 0.3in 2.5in, clip, width=\textwidth,draft=false]{./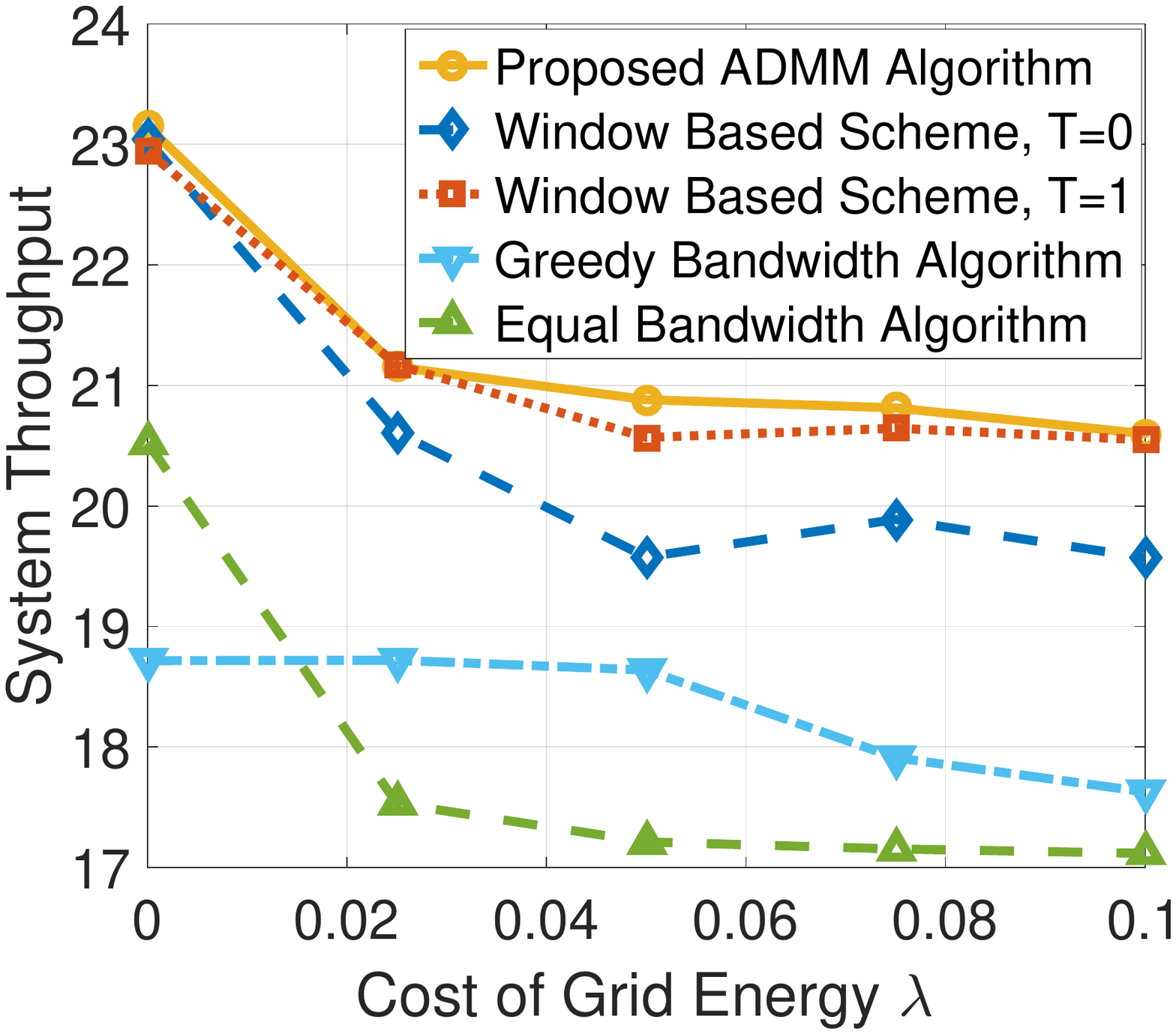}}\vspace{-.2in}
		\caption{\small System throughput  of proposed ADMM algorithm, as compared with the  different baseline schemes.  }
		\label{fig:rate_baseline}
	\end{center}
\end{minipage}
\vspace{-.25in}
\end{figure*}

\vspace{-.15in}
\subsection{Energy Arrival and Utilization }
We will now compare the amounts of energy that enters a node, and that leaves a node including the power consumption for communication transmission.
% We consider $N=5$ users scheduled over $K=5$ time slots. The means of the energy arrival processes are set as $\Delta_n = 3(n-1)$, for $n=1, \cdots, 8$.  
The grid energy cost $\lambda=0.01$, $B_n^{\max}=10W$, and all other parameters  are the same as those in Section \ref{sec:simulation_cooperation}. For a given realization of channel gains and energy arrivals, we find the total amount of incoming energy at each node and separate it  into the amount of energy harvested ${\cal L}$, the amount of energy donated by other nodes ${\cal R}$ (incoming part), and the amount of grid energy ${\cal G}$. We also find the amount of energy that leaves the node and separate it into  the amount of energy used for communication transmission ${\cal P}$, the amount of energy donated to other nodes ${\cal R}$ (outgoing part), and the amount of energy discharged ${\cal D}$. We consider two  donation cost values, $\mu = 0.01, 0.1$, in Figures \ref{fig:diffen_mu} and \ref{fig:diffen_muin} respectively.  The first three nodes (with lower harvested energy arrival) have incoming donated energy, while the nodes with higher harvested energy have outgoing energy when $\mu$ is small. This is because the later nodes can store energy in the first nodes to avoid going over the battery which can be transferred back in  future time slots, and this impact is higher at smaller values of $\mu$. It is easier to donate energy at lower value of $\mu$ and thus external grid energy used is lower, and there is less discharge as compared to higher value of $\mu$. At lower $\mu$, more energy can be used by the users with less harvested energy.  Finally, we note that the total amount of outgoing energy is slightly less than the total amount of incoming energy in some nodes. This is because a residual amount of energy is stored in the battery. 
%Thus, smaller value of $\mu$ encourages efficient energy cooperation among the nodes.

%\begin{figure}[ht]
%	\vspace{-3mm}
%%	\begin{minipage}{0.45\textwidth}
%		\begin{center}
%			{\includegraphics[trim = 0.2in 2in 0.3in 2.5in, clip, width=.45\textwidth,draft=false]{./figs/new_figures/sys_rate_baseline.pdf}}\vspace{-.2in}
%			\caption{\small System throughput  of proposed ADMM algorithm, as compared with the  different baseline schemes.  }
%			\label{fig:rate_baseline}
%		\end{center}
%	\end{minipage}
%	\hspace{.05\textwidth}
	%\vspace{3mm}
%	\begin{minipage}{0.45\textwidth}
%		\begin{center}
%			{\includegraphics[trim = .2in 2in .5in 2.5in, clip, width=\textwidth,draft=false]{./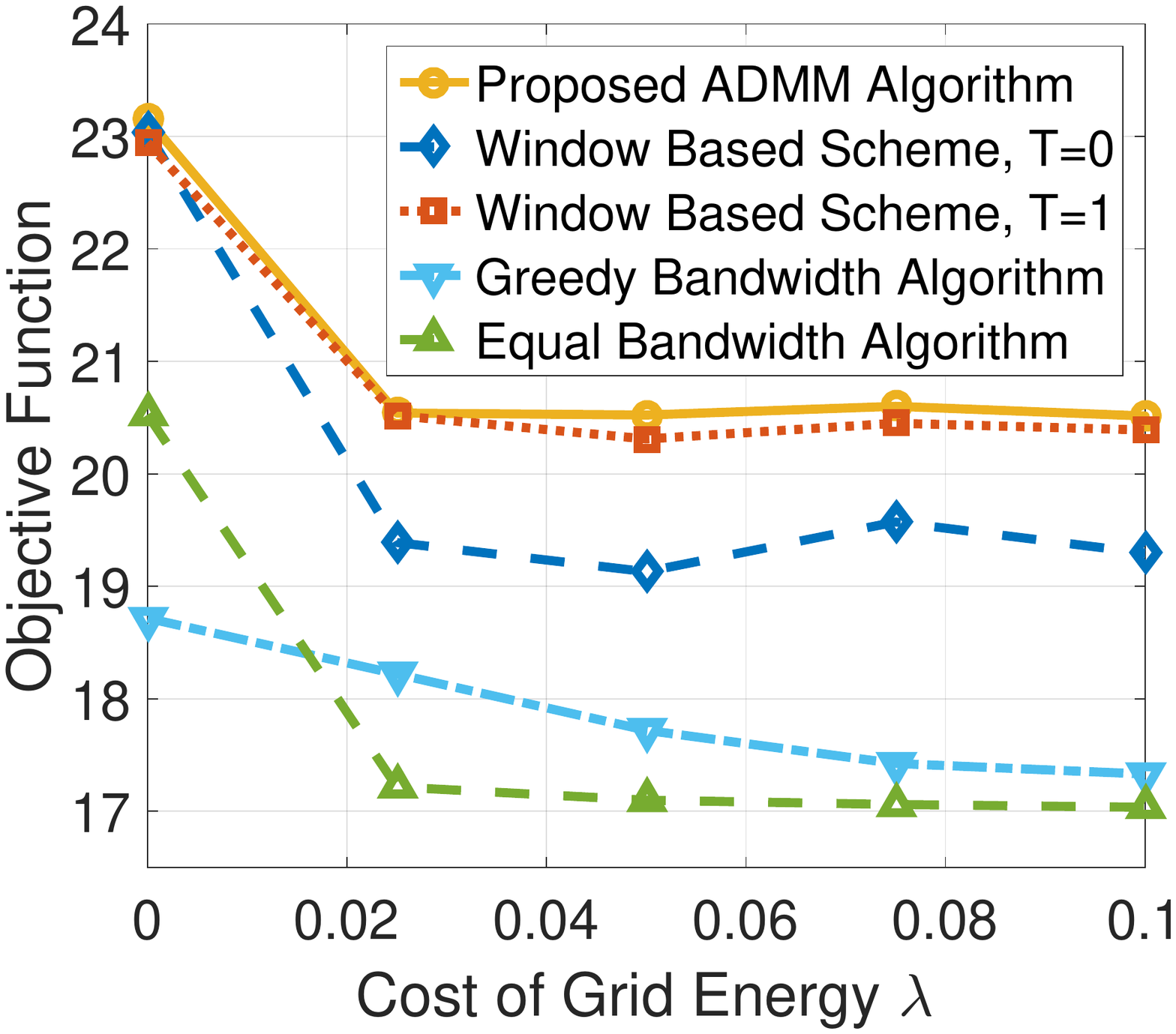}}\vspace{-.2in}
%			\caption{\small System objective comparison of proposed ADMM algorithm with different baseline schemes. }
%			\label{fig:obj_baseline}
%		\end{center}
%	\end{minipage}
	%\vspace{-.25in}
%\end{figure}

\vspace{-.15in}
\subsection {Comparison with  Baseline Schemes}\label{sec:baselines}

We consider three baseline schemes. The first is a window-based scheme where at each time,  each node only has the information of the energy arrival in its current and the next $T<K$ number of time slots. At each time $i$, we perform the optimization for $\min(T+1,K-i+1)$ time-slots using the proposed Proximal Jacobian ADMM strategy and use these decisions for the current time slot. { Even though the complexity of the algorithm at  each time is lower ($O(N^2 (T+1)^2)$), the process is repeated at each time (thus having overall complexity of $O(N^2T^2K)$).} The second is equal bandwidth strategy, where the bandwidth is equally divided among all transmission links, and the remaining variables are optimized using the Proximal Jacobian ADMM algorithm. The third is a greedy bandwidth scheme, which  assigns all available bandwidth to the link with the highest channel gain. Given such bandwidth allocation, the rest of the variables are optimized based on the proposed Proximal Jacobian ADMM strategy.

\begin{table}
	\vspace{-.1in}
	\caption{\small Convergence speed for varying $N$.}	\label{table:convergence}
	\centering
		\vspace{-.1in}
\resizebox{.45\textwidth}{!}{
%
%	\centering
	\begin{tabular}{| c | c | c | c |}
		\hline
		$N$ & Iterations to converge & Time / iteration (ms) & System throughput\\ \hline
		5 & 3715 & 88.5 & 21.4 \\ \hline
		10 & 4004 & 189.7 & 25.2\\ \hline
		15 & 4008 & 299.7 & 28.1\\ \hline
		20 & 4050 & 426.6 & 29.8\\ \hline
		25 & 4014 & 573.7 & 31.0\\ \hline
		30 & 4085 & 733.8 & 31.9\\ \hline
	\end{tabular}}
	\vspace{-.3in}
\end{table}

%Now we discuss the comparison of the proposed ADMM algorithm with two categories of baseline schemes: the window based scheme and the bandwidth allocation scheme. For the window based schemes, rather than the full $K$ number of time slots, only a window period of $T<K$ time slots is taken into account. That is, at a given time slot, the system only has the information of the energy arrival in its current and the next $T$ number of time slots. The system then carries out the proposed ADMM algorithm within such limited time window. It is a sequential scheme such that all the time slots are covered in at least one time window. The bandwidth baseline schemes considered are the equal bandwidth strategy, which evenly divides the total available bandwidth to each pair of transmission link, and the greedy bandwidth scheme, which namely assigns all available bandwidth to the link with the highest gain. 

We consider $N=5$ users,  $K=5$ time units,  $\Delta_n=\Delta=10$, and variance of energy arrival in each time slot as $36$. Let $\mu=0.8$ and all other parameters be the same as those in Section \ref{sec:conv}. Figure \ref{fig:rate_baseline} depicts the weighted sum rate (or the system throughput) for the proposed algorithm as compared to the different base-lines for varying $\lambda$.  For window-based schemes, we consider $T=0$ and $T=1$. As $T$ increases, the performance improves, and will reach optimal when $T=K-1$. When $T=0$, the algorithm is causal since it does not use any future information. However, the algorithm does not try to save for future ({ since the optimization is performed only for the current time-slot}). { Thus, there is a bigger difference between $T=0$ and $T=1$, while there is diminishing return since there is not a big difference between $T=1$ and $T=4$ (offline scheme). We note that the closeness between $T=1$ and $T=4$ are for the particular parameters chosen, while the diminishing returns in $T$ should hold in general. } Thus, with limited prediction, window-based schemes can potentially be used where in each window, the proposed algorithm is used for system optimization. We also note that the proposed strategy significantly outperforms the greedy and equal bandwidth strategies thus depicting that the joint optimization over all variables is necessary. 

%The results of the system throughputs and the system objectives are presented in Figure \ref{fig:rate_baseline} and Figure \ref{fig:obj_baseline} respectively, for two window based schemes with $T=0$ and $T=1$, as well as the bandwidth baseline schemes. The proposed ADMM algorithm achieves the highest rate and system objective value, and it is a special case of the window based scheme with $T=K$. The figures show that the window based scheme with higher $T$ achieves greater value of throughput. An explanation is the larger window of $T$ gives the system a better understanding of the amount of energy harvested in longer future, so that the system can plan and spend the energy more efficiently in the time slots with higher gains. In the extreme case of $T=0$, the system has no knowledge of the upcoming harvested energy, so every node will greedily optimize only for the currently time slot, and it is suboptimal comparing to the proposed ADMM algorithm. On the other hand, the proposed ADMM algorithm outperforms the two bandwidth baselines. Those baselines essentially have lower dimensions since the bandwidth variables are fixed, while they are jointly optimized with the power variables in the proposed ADMM algorithm, and hence they achieve lower values.

\vspace{-.15in}
\subsection{Convergence with Increasing Number of Users}

We consider increasing the number of users $N$, for the parameters as in Section \ref{sec:baselines}, and $\lambda=0.01$. The results are presented in  Table \ref{table:convergence}. The test was carried out on a 64-bit desktop with 3.5 GHz quadcore processor and 20 GB RAM. We see that the number of iterations to converge to a value lower than the chosen threshold $\eta = 10^{-6}$ generally increases with $N$, since the number of variables is growing in the expanding system. On the other hand, the time taken per iteration increases as well. At these parameters, the complexity seems to increase slightly more than linearly with $N$ even though theoretically, the complexity scales as $N^2$. This is because the only step that takes $N^2K^2$ complexity is the update of variables $r_{n,m}^k$, which potentially do not take that dominant time at the considered parameters, and all other updates are $O(NK^2)$. The system throughput  increases as the number of users increase. However, we see diminishing gains with increasing number of users. { We  note that the parallelization of the algorithm for different $(n,k)$ can significantly reduce the time per iteration in practice. } % the increment diminishes since the total available bandwidth is fixed for all cases.

%From the result we show that the numbers of iterations to converge are almost the same for all values of $N$, which take about 4000 iterations. The computation time is approximately linear with $N$, which is expected since the number of computational operations increases approximately linearly with $N$.

\vspace{-.2in}

\section{Conclusions}

We have treated the energy-bandwidth allocation problem for multiuser network  where each node is powered with both renewable and grid energy, nodes can cooperate, and each node has a limited battery capacity and finite transmission power. The objective is to maximize the weighted sum throughput, and minimize the use of grid energy and the amount of energy cooperation. An iterative algorithm based on the Proximal Jacobian ADMM is proposed and proved to be optimal. Numerical results demonstrate the different tradeoffs in the optimal solution. { Extension of the results with a data buffer at the transmitter is an open problem.  Using stochastic information of the energy arrivals and the channel gains to come up with optimal online algorithms is an interesting future direction. Incentive based mechanisms to help increase users' willingness to donate energy is left for the future.}

%We note that other proposed variants for multi-variable ADMM like \cite{2014arXiv1406.4064W,2015arXiv150204391R,2013arXiv1312.3040D} may also be used instead of the ADMM proposed in this paper, and exploring the best variant remains further work.
\vspace{-.2in}

\bibliographystyle{IEEEtran}
\bibliography{energyBwBib,ourenergypapers}
\newpage
\clearpage
\section{Appendix: Solving Seven Optimization Problems}

For primal updates, we have seven sets of problems. We now consider solving these problems one by one. Note that we will ignore the iteration numbers, accounting that the last values of the other variables are used.

\noindent {\bf Problem 1: Updating $({\cal P},{\cal A})$ }

\begin{eqnarray}
&&(p_n^k,a_n^k)=\argmin_{p_n^k \geq 0,a_n^k \geq 0} \left(-W_n  a_n^k  \log(1+p_n^kH_n^k/a_n^k)  \right. \nonumber\\&& + y_{3,n}^kp_n^k + y_{4,n}^kp_n^k+y_{6,n}^ka_n^k + \frac{\rho}{2} \left(p_n^k - l_n^k-s_n^k- g_n^k\right)^2  \nonumber \\ &&+ \frac{\rho}{2} \left(p_n^k-P_n +  u_{3,n}^k\right)^2+\frac{\rho}{2}(a_n^k+\sum_{j,j\neq n}^N a_j^k -1)^2\nonumber\\&&\left. +\frac{1}{2}\tau(p_n^k-(p_n^k)^{i})^2+\frac{1}{2}\tau(a_n^k-(a_n^k)^{i})^2\right).
\end{eqnarray}

This optimization is for a jointly convex function that is not differentiable at $a_n^k=0$ at which $p_n^k = 0$. The KKT conditions for $a_n^k>0$ are as follows. 

\begin{eqnarray}
 &&-\frac{W_n  H_n^k}{1+p_n^kH_n^k/a_n^k} + y_{3,n}^k + y_{4,n}^k+  {\rho} \left(p_n^k - l_n^k-s_n^k- g_n^k\right) \nonumber\\&& + {\rho} \left(p_n^k-P_n +  u_{3,n}^k\right)+\tau(p_n^k-(p_n^k)^{i}) = 0 \label{eq:q1}, \\
 &&-W_n   \log(1+p_n^kH_n^k/a_n^k)  +\frac{W_n p_n^kH_n^k/a_n^k}{  1+p_n^kH_n^k/a_n^k} +y_{6,n}^k\nonumber\\&& +{\rho}(a_n^k+\sum_{j,j\neq n}^N a_j^k -1)+\tau(a_n^k-(a_n^k)^{i})=0 \label{eq:q2}.
\end{eqnarray}
From \eqref{eq:q1}, we can solve $a_n^k$ as $a_n^k = p_n^k H_n^k/$  $\left(\frac{W_n H_n^k}{ y_{3,n}^k + y_{4,n}^k+  {\rho} \left(p_n^k - l_n^k-s_n^k- g_n^k\right) + {\rho} \left(p_n^k-P_n +  u_{3,n}^k\right)+\tau(p_n^k-(p_n^k)^{i})} -1\right)$, which can be substituted in \eqref{eq:q2} where we get an equation with a single variable which can be solved. If there is a solution $(p_n^k,a_n^k)$ with $p_n^k\ge 0$, and $y_{3,n}^k + y_{4,n}^k+  {\rho} \left(p_n^k - l_n^k-s_n^k- g_n^k\right) + {\rho} \left(p_n^k-P_n +  u_{3,n}^k\right)+\tau(p_n^k-(p_n^k)^{i})<W_n H_n^k$, this is the required solution. Else, $a_n^k = p_n^k =0$.

%Expanding the square terms and canceling all terms that are not function of $p_n^k$ yields,
%\begin{eqnarray}
%p_n^k&=& \argmin_{p_n^k \geq 0} \left(-W_n  a_n^k  \log(1+p_n^kH_n^k/a_n^k) +(y_{3,n}^k+ y_{4,n}^k)p_n^k \right. \nonumber \\ && \left.+\frac{\rho}{2} \left((p_n^k)^2 - 2p_n^kl_n^k-2p_n^ks_n^k- 2p_n^kg_n^k\right) + \frac{\rho}{2} \left((p_n^k)^2-2p_n^k P_n +2p_n^k  u_{3,n}^k\right)\right)
%\end{eqnarray}
%The argument is strictly convex with respect to $p_n^k$. Taking derivative to zero, and solving for $p_n^k$ gives $c_1(p_n^k)^2+c_2p_n^k+c_3=0$, where
%$c_1= 2\rho H_n^k/a_n^k$, $c_2=\rho(2\rho+H_n^k/a_n^k(y_{3,n}^k+y_{4,n}^k+\rho(u_{3,n}^k-l_n^k-s_n^k - g_n^k-P_n)))$, and
%$c_3=\rho(u_{3,n}^k-l_n^k-s_n^k - g_n^k-P_n)+y_{3,n}^k+y_{4,n}^k-W_n H_n^k$. Thus, if $c_2^2 <4c_1c_3$, the derivative is either always positive in which case optimal answer is zero, or always negative in which case the optimal answer is infinity. However, in the case when $c_2^2 \ge 4c_1c_3$, the solution is $\max(0, \frac{-c_2+\sqrt{c_2^2-4c_1c_3}}{2c_1})$. Thus,
%\begin{equation}
%p_n^k = \begin{cases}
%I(c_3), \quad  c_2^2<4c_1c_3\\
%\max\left(0, \frac{-c_2+\sqrt{c_2^2-4c_1c_3}}{2c_1}\right), \quad c_2^2\ge 4c_1c_3
%\end{cases}\label{updatepnk}
%\end{equation}

\noindent {\bf Problem 2: Updating ${\cal L}$ }

Let $\beta_{n}^{k,v} = \sum_{t=1, t\ne k}^{v} l_n^t +\sum_{t=1}^{v}\sum_{m\in {\cal N}, m\ne n} r_{n,m}^t -\sum_{t=1}^{v}\sum_{m\in {\cal N}} r_{m,n}^t +\sum_{t=1}^{v}s_n^t+\sum_{t=1}^{v}D_n^t-E_n^v$ for $k\le v\le K$ and $n\in {\cal N}$.

\begin{eqnarray}
&&l_n^k=\argmin_{l_n^k \geq 0} \left(\left(\sum_{v=k}^{K}\left(y_{1,n}^v +y_{2,n}^v\right) - y_{3,n}^k\right)l_n^k \right.\nonumber\\&&+\frac{\rho}{2} \sum_{v=k}^{K}\left(l_n^k+ \beta_{n}^{k,v}+ u_{1,n}^v\right)^2
   \nonumber \\ && +\frac{\rho}{2} \sum_{v=k}^{K}\left(l_n^k+ \beta_{n}^{k,v} - u_{2,n}^v+B_n^{\max}\right)^2\nonumber\\&& \left.+ \frac{\rho}{2}\left(p_n^k - l_n^k - s_n^k g_n^k\right)^2+\frac{1}{2}\tau(l_n^k-(l_n^k)^{i})^2\right).
\end{eqnarray}
This is a quadratic equation. By differentiating it we obtain
%\begin{figure*}
\begin{eqnarray}
l_n^k=\frac{1}{\rho(2(K-k)+3)+\tau}\max \left(0,\tau (l_n^k)^{i} \quad \quad \quad \quad \right.\nonumber\\
- \left(\sum_{v=k}^{K}\left(y_{1,n}^v +y_{2,n}^v\right) - y_{3,n}^k +\rho \sum_{v=k}^{K}  \left( \beta_{n}^{k,v}+ u_{1,n}^v\right) \quad\right.\nonumber \\ 
 \left.\left.  +\rho \sum_{v=k}^{K}  \left(\beta_{n}^{k,v} - u_{2,n}^v+B_n^{\max} \right) - \rho  \left( p_n^k  - s_n^k- g_n^k\right)\right)\right).\label{lnkupdate}
\end{eqnarray}
%\end{figure*}
%Expanding the square terms and canceling all terms that are not function of $l_n^k$ yields,\\
%\begin{multline}
%l_n^k = \argmin_{l_n^k \geq 0} ((y_{1,n}^k) +(y_{2,n}^k) - y_{3,n}^k)l_n^k+\frac{\rho}{2} ( l_n^k^2+2l_n^k\beta_n^k)\\
%  + \frac{\rho}{2} (   l_n^k^2+2l_n^k\beta_n^k+2l_n^k B_n^{\max})
%  +  \frac{\rho}{2} (-2l_n^kp_n^k + l_n^k^2 +2l_n^ks_n^k +2l_n^k g_n^k)
%\end{multline}

%\begin{equation}
%l_n^k = \frac{1}{2\rho(K-k+1)+\rho}\max(0, -\sum_{v \geq k}y_{1,n}^v -\sum_{v \geq k}y_{2,n}^v+y_{3,n}^k+\rho( p_n^k-g_n^k-\sum_{v \geq k}X_{n,k}^v))
%\end{equation}

\noindent {\bf Problem 3: Updating ${\cal R}$ }

For $m=n$, $r_{m,n}^k = 0$. Otherwise, let $\nu_{m,n}^{k,v} = \sum_{t=1}^{v} l_m^t +\sum_{t=1, \cdots,v, b\in {\cal N}, (t,b)\ne (k,n), b\ne m} r_{m,b}^t -\sum_{t=1}^{v}\sum_{b\in {\cal N}} r_{b,m}^t +\sum_{t=1}^{v}s_m^t+\sum_{t=1}^{v}D_m^t-E_m^v$, and $\gamma_{m,n}^{k,v} = \sum_{t=1}^{v} l_n^t +\sum_{t=1}^{v}\sum_{n\in {\cal N}} r_{n,b}^t -\sum_{t=1, \cdots,v, b\in {\cal N}, (t,b)\ne (k,m), b\ne n} r_{b,n}^t +\sum_{t=1}^{v}s_n^t+\sum_{t=1}^{v}D_n^t-E_n^v$ for $k\le v\le K$ and $n,m\in {\cal N}, n \ne m$. Then, the optimization for ${\cal R}$ reduces as

\begin{eqnarray}
&&r_{m,n}^k=\argmin_{r_{m,n}^k \geq 0}\left(\left(\mu+\sum_{v=k}^{K}\left(y_{1,m}^v - y_{1,n}^v+y_{2,m}^v - y_{2,n}^v\right) \right.\right.\nonumber\\
&&\left.-y_{5,n}^k\right)r_{m,n}^k+\frac{\rho}{2} \sum_{v=k}^{K}\left( r_{m,n}^k +\nu_{m,n}^{k,v}+ u_{1,m}^v\right)^2 \nonumber \\ 
&& +\frac{\rho}{2} \sum_{v=k}^{K} \left( -r_{m,n}^k +\gamma_{m,n}^{k,v}+ (u_{1,n}^v)\right)^2\nonumber\\&&
+\frac{\rho}{2} \sum_{v=k}^{K}\left(r_{m,n}^k +\nu_{m,n}^{k,v}- u_{2,m}^v+B_m^{\max}\right)^2  \nonumber \\ && +\frac{\rho}{2} \sum_{v=k}^{K}\left(-r_{m,n}^k +\gamma_{m,n}^{k,v}- u_{2,n}^v+B_n^{\max}\right)^2
\nonumber\\&& +\frac{\rho}{2}\left(s_n^k +u_{4,n} - \sum_{b\in {\cal N}, b\ne n, b\ne m} r_{b,n}^k - r_{m,n}^k\right)^2\nonumber\\&&\left.+\frac{1}{2}\tau(r_{m,n}^k-(r_{m,n}^k)^{i})^2\right).
\end{eqnarray}

This is a quadratic equation. By differentiating it we obtain

\begin{eqnarray}
&&r_{m,n}^k =\frac{1}{\rho(4(K-k)+5)+\tau}\max \left(0,\tau(r_{m,n}^k)^{i}-\left(\mu+\right. \right. \nonumber\\
&&\sum_{v=k}^{K}\left(y_{1,m}^v - y_{1,n}^v+y_{2,m}^v - y_{2,n}^v\right) -y_{5,n}^k \nonumber\\
&&+\rho \sum_{v=k}^{K} \left(\nu_{m,n}^{k,v}+ u_{1,m}^v-\gamma_{m,n}^{k,v}- u_{1,n}^v\right)\nonumber\\&&+\rho \sum_{v=k}^{K} \left(\nu_{m,n}^{k,v}- u_{2,m}^v+B_m^{\max}-\gamma_{m,n}^{k,v}+ u_{2,n}^v-B_n^{\max}\right) \nonumber \\ && \left. \left.
 -\rho   \left( s_n^k +u_{4,n} - \sum_{b\in {\cal N}, b\ne n, b\ne m} r_{b,n}^k \right)\right) \right).\label{updatermn}
\end{eqnarray}

\noindent {\bf Problem 4: Updating ${\cal S}$ }

Let $\beta_{n}^{k,v} = \sum_{t=1}^{v} l_n^t +\sum_{t=1}^{v}\sum_{m\in {\cal N}, m\ne n} r_{n,m}^t -\sum_{t=1}^{v}\sum_{m\in {\cal N}} r_{m,n}^t +\sum_{t=1, t\ne k}^{v}s_n^t+\sum_{t=1}^{v}D_n^t-E_n^v$ for $k\le v\le K$ and $n\in {\cal N}$.

\begin{eqnarray}
&&s_n^k=\argmin_{s_n^k \geq 0} \left(\left(\sum_{v=k}^{K}\left(y_{1,n}^v +y_{2,n}^v\right) - y_{3,n}^k+ y_{5,n}^k\right)s_n^k \right.\nonumber\\
&&+\frac{\rho}{2} \sum_{v=k}^{K}\left(s_n^k+ \beta_{n}^{k,v}+ u_{1,n}^v\right)^2
  \nonumber \\&&+\frac{\rho}{2} \sum_{v=k}^{K}\left(s_n^k+ \beta_{n}^{k,v} - u_{2,n}^v+B_n^{\max}\right)^2\nonumber\\ 
  &&  +\frac{\rho}{2}(p_n^k - l_n^k - s_n^k- g_n^k)^2 +\frac{1}{2}\tau(s_n^k-(s_n^k)^{i})^2 \nonumber\\&&\left.+\frac{\rho}{2}\left(s_n^k +u_{4,n} - \sum_{m\in {\cal N}, m\ne n} r_{m,n}^k\right)^2 \right).
\end{eqnarray}
This is a quadratic equation. By differentiating it we obtain

\begin{eqnarray}
&&s_n^k=\frac{1}{\rho(2(K-k)+4)+\tau}\max \left(0,\tau (s_n^k)^{i}\right.  \nonumber\\
&&- \left(\sum_{v=k}^{K}\left(y_{1,n}^v +y_{2,n}^v\right)  +\rho \sum_{v=k}^{K}  \left( 2\beta_{n}^{k,v}+ u_{1,n}^v - u_{2,n}^v  \right. \right.\nonumber\\
&&\left.+B_n^{\max}\right)- y_{3,n}^k+ y_{5,n}^k -  \rho  \left(p_n^k - l_n^k - g_n^k\right)
  \nonumber \\&&  \left. \left.+\rho   \left(u_{4,n} - \sum_{m\in {\cal N}, m\ne n} r_{m,n}^k\right)\right)\right).\label{snkupdate}
\end{eqnarray}

\noindent {\bf Problem 5: Updating ${\cal G}$ }

\begin{eqnarray}
g_n^k&=&\argmin_{g_n^k \geq 0} \left(\lambda g_n^k -y_{3,n}^kg_n^k+ \frac{\rho}{2} \left( p_n^k - l_n^k - s_n^k -g_n^k\right)^2\right. \nonumber\\
&&\left. +\frac{1}{2}\tau(g_n^k-(g_n^k)^{i})^2\right).
\end{eqnarray}
This is a quadratic equation. By differentiating it we obtain

\begin{equation}
g_n^k = \frac{1}{\rho+\tau}\max \left(0, -\lambda +y_{3,n}^k+ \rho\left(p_n^k- l_n^k-s_n^k\right)+\tau (g_n^k)^{i} \right).\label{gnkupdate}
\end{equation}

\noindent {\bf Problem 6: Updating ${\cal D}$ }

Let $\beta_{n}^{k,v} = \sum_{t=1}^{v} l_n^t +\sum_{t=1}^{v}\sum_{m\in {\cal N}, m\ne n} r_{n,m}^t -\sum_{t=1}^{v}\sum_{m\in {\cal N}} r_{m,n}^t +\sum_{t=1}^{v}s_n^t+\sum_{t=1,t\ne k}^{v}D_n^t-E_n^v$ for $k\le v\le K$ and $n\in {\cal N}$.

\begin{eqnarray}
&&D_n^k=\argmin_{D_n^k \geq 0} \left(\left(\sum_{v=k}^{K}\left(y_{1,n}^v +y_{2,n}^v\right)\right)D_n^k \right. \nonumber\\
&&+\frac{\rho}{2} \sum_{v=k}^{K}\left(D_n^k+ \beta_{n}^{k,v}+ u_{1,n}^v\right)^2+\frac{1}{2}\tau(D_n^k-(D_n^k)^{i})^2\nonumber\\
&&\left. +\frac{\rho}{2} \sum_{v=k}^{K}\left(D_n^k+ \beta_{n}^{k,v} - u_{2,n}^v+B_n^{\max}\right)^2\right).
\end{eqnarray}

This is a quadratic equation. By differentiating it we obtain

\begin{eqnarray}
D_n^k=\frac{1}{2\rho(K-k+1)+\tau}\max \left(0,-\left( \sum_{v=k}^{K}\left(y_{1,n}^v +y_{2,n}^v\right)\right.\right. \nonumber\\\left.\left.+\rho \sum_{v=k}^{K}  \left( 2\beta_{n}^{k,v}+ u_{1,n}^v - u_{2,n}^v+B_n^{\max}\right)\right)+\tau (D_n^k)^{i}\right).\label{updatednk}
\end{eqnarray}

\noindent {\bf Problem 7: Updating ${\cal U}$ }

The optimization for each of $u_{i,n}^k$ is a quadratic problem, and thus the solutions for these problems are as follows.

\begin{eqnarray}
u_{1,n}^k &=& \frac{1}{\rho+\tau}\max\left(0,-y_{1,n}^k -\rho \sum_{t=1}^{k} l_n^t -\rho \sum_{t=1}^{k} \right.\nonumber\\&&\sum_{m\in {\cal N}, m\ne n} r_{n,m}^t +\rho\sum_{t=1}^{k}\sum_{m\in {\cal N}} r_{m,n}^t -\rho\sum_{t=1}^{k}s_n^t\nonumber\\&&\left. -\rho\sum_{t=1}^{k}D_n^t+\rho E_n^k+\tau (u_{1,n}^k)^{i}\right)\label{updateunk1},\\
u_{2,n}^k &=& \frac{1}{\rho+\tau}\max\left(0,y_{2,n}^k +\rho \sum_{t=1}^{k} l_n^t +\rho \sum_{t=1}^{k}\right.\nonumber\\&&
\sum_{m\in {\cal N}, m\ne n} r_{n,m}^t -\rho \sum_{t=1}^{k}\sum_{m\in {\cal N}} r_{m,n}^t +\rho \sum_{t=1}^{k}s_n^t\nonumber\\&&\left.+\rho \sum_{t=1}^{k}D_n^t-\rho E_n^k+\rho B_n^{\max}+\tau (u_{2,n}^k)^{i}\right),\\
u_{3,n}^k &=& \frac{1}{\rho+\tau}\max\left(0,-y_{4,n}^k -\rho p_n^k+\rho P_n\right.\nonumber\\&&\left.+\tau (u_{3,n}^k)^{i}\right),\\
u_{4,n}^k &=& \frac{1}{\rho+\tau}\max\left(0,-y_{5,n}^k -\rho s_n^k \right.\nonumber\\&&\left.+ \rho \sum_{m\in {\cal N}, m\ne n} r_{m,n}^k+\tau (u_{4,n}^k)^{i}\right).\label{updateunk2}
\end{eqnarray}

\end{document}